\title{}\date{}
\title{Solutions to the Master Equations Governing Fractional Vortices}
\author{Chang-Shou Lin\\Department of Mathematics,\\ National Taiwan University,\\ Taipei, Taiwan 10617, ROC\\ \\Gabriella Tarantello\\Dipartimento di Matematica\\Unversit\`{a} di
Roma ``Tor Vergata"\\Via della Ricerca Scientifica\\00133 Rome, Italy\\ \\Yisong Yang\\Institute of Contemporary Mathematics\\School of Mathematics\\Henan University\\
Kaifeng, Henan 475004, PR China\\and\\Department of Mathematics\\Polytechnic Institute of New York University\\Brooklyn, New York 11201, USA}
\newcommand{\bfR}{{\Bbb R}}
\newcommand{\bfC}{{\Bbb C}}
\newcommand{\ud}{\underline}
\def\XXint#1#2#3{{\setbox0=\hbox{$#1{#2#3}{\int}$}
 \vcenter{\hbox{$#2#3$}}\kern-.5\wd0}}
\newtheorem{oldtheorem}{Theorem}[section]
\newtheorem{oldassertion}[oldtheorem]{Assertion}
\newtheorem{oldproposition}[oldtheorem]{Proposition}
\newtheorem{oldremark}[oldtheorem]{Remark}
\newtheorem{oldlemma}[oldtheorem]{Lemma}
\newtheorem{olddefinition}[oldtheorem]{Definition}
\newtheorem{oldclaim}[oldtheorem]{Claim}
\newtheorem{oldcorollary}[oldtheorem]{Corollary}
\newenvironment{theorem}{\begin{oldtheorem}$\!\!\!${\bf.}}{\end{oldtheorem}}
\newenvironment{remark}{\begin{oldremark}$\!\!\!${\bf.}}{\end{oldremark}}
\newenvironment{lemma}{\begin{oldlemma}$\!\!\!${\bf.}}{\end{oldlemma}}
\newbox\qedbox
\newenvironment{proof}{\smallskip\noindent{\bf Proof.}\hskip \labelsep}%
                        {\hfill\penalty10000\copy\qedbox\par\medskip}
\newcommand{\dd}{\mbox{d}}
\newcommand{\ee}{\end{equation}}
\newcommand{\be}{\begin{equation}}\newcommand{\bea}{\begin{eqnarray}}
\newcommand{\eea}{\end{eqnarray}}
\newcommand{\ii}{\mbox{i}}\newcommand{\e}{\mbox{e}}
\newcommand{\pa}{\partial}\newcommand{\Om}{\Omega}
\newcommand{\vep}{\varepsilon}
\newcommand{\nn}{\nonumber}
\newcommand{\lm}{\lambda}
\begin{document}
\maketitle
\begin{abstract}
By means of variational methods, in this paper, we establish  sharp existence results for solutions of  the
master equations governing `fractional multiple vortices.'
In the doubly periodic situation, the conditions for existence are both necessary and sufficient and give the upper bounds for the vortex numbers
in terms of the size of the periodic cell domain.
In the planar situation, there is no restriction on the vortex numbers. In both situations, the solutions are uniquely determined by the prescribed locations
and the local winding numbers of the vortices.
\end{abstract}

\maketitle

\section{Introduction}
\setcounter{equation}{0}

Recall that the classical Abrikosov--Nielsen--Olesen vortices \cite{Ab,JT,NO} are static solutions of the Abelian Higgs model governed by the Hamiltonian
\be \label{a1.1}
{\cal H}=\frac12 F^2+\frac12|D\phi|^2+\frac\lm8(|\phi|^2-1)^2,
\ee
where $F=\pa_1 A_2-\pa_2 A_1$ is the magnetic field induced from the real-valued gauge potential field ${\bf A}=(A_1,A_2)$, $\phi$ is the complex-valued scalar Higgs field, $D\phi=
\nabla\phi-\ii {\bf A}\phi=(D_1\phi,D_2\phi)$ denotes the gauge-covariant derivative, $\lm>0$ is a coupling parameter, which satisfy the
Euler--Lagrange equations of (\ref{a1.1}) given by
\bea
\Delta\phi-2\ii {\bf A}\cdot\nabla\phi-\ii (\nabla\cdot{\bf A})\phi&=&|{\bf A}|^2\phi+\frac\lm2(|\phi|^2-1)\phi,\label{GL1}\\
\Delta {\bf A} -\nabla (\nabla\cdot {\bf A})&=&\frac\ii2(\overline{\phi}D\phi-\phi \overline{D\phi}),\label{GL2}
\eea
also known as the Ginzburg--Landau equations \cite{GL}. At the critical coupling $\lm=1$ separating type-I and type-II superconductivity, the
equations (\ref{GL1}) and (\ref{GL2}) over $\bfR^2$ are equivalent \cite{T2} to the self-dual or BPS (after the pioneering explorations of Bogomol'nyi \cite{Bo} and Prasad--Sommerfield \cite{PS}) vortex equations
\bea
D_1\phi\pm\ii D_2\phi&=&0,\label{glBPS1}\\
F\pm\frac12(|\phi|^2-1)&=&0.\label{glBPS2}
\eea
As a consequence, the solutions are all characterized precisely by the zero distribution
of the Higgs field $\phi$ \cite{JT,T1,T2} where superconductivity is absent. More precisely, the zeros of $\phi$, say $p_1,\cdots,p_N$, which may be arbitrarily prescribed, are
the centers of magnetic vortices. 
Solutions of such type are known as multiple vortices, and analytically (cf. \cite{JT}) they are governed by the master equation
\be \label{a1.6}
\Delta u=\e^u-1+4\pi\sum_{s=1}^N \delta_{p_s}(x).
\ee
Although the equation (\ref{a1.6}) seems to be rather simple, it is not integrable \cite{Schiff} and its understanding requires techniques \cite{JT,Ta1,WY,Ybook} based on functional analysis. Besides,
an equation of similar structure appears in conformal geometry \cite{Aubin,Chang1,Chang2,Chang3,CL} which has been an actively pursued subject in geometric analysis.

Physically, the solutions of (\ref{a1.6}) give rise to thinly formed vortex tubes in a superconductor, a phenomenon due to the Meissner effect, which is essential to a mechanism called monopole confinement \cite{Gr,ShY}.
Those facts provide some of the earliest clues \cite{Man1,Man2,Nambu,tH1,tH2} to the great puzzle of quark confinement.
Indeed, for quark confinement, Seiberg and Witten \cite{SW} showed how vortex tubes are generated in non-Abelian gauge field theory models through a dual Meissner effect
and how quark confinement is achieved through (color-charged) non-Abelian monopole condensation \cite{Gr,ShY}. Due to the interest and importance of non-Abelian vortices in monopole and quark confinement
problems, a burst of studies have been carried out and a wide range of nonlinear vortex equations
of rich features have been derived by theoretical physicists.
Here we mention the seminal papers of Marshakov and Yung \cite{MY}, Hanany and Tong \cite{HT}, and Auzzi,  Bolognesi,  Evslin,  Konishi, and Yung \cite{Auzzi},
and refer to the survey articles \cite{Eto-survey,Kon-survey1,Kon-survey2,ShY0,Tong} and references therein for  various relevant results in this active area.
Also we indicate in  \cite{Lieb-Yang,LY,LY2} a series of
existence and uniqueness results concerning the associated partial differential equation problems.

In the present paper, we aim at establishing the existence and uniqueness theorems for solutions of the master equations governing `fractional vortices' in the recent work of 
Eto, Fujimori,  Gudnason,  Konishi,  Nagashima, Nitta, Ohashi, and Vinci, \cite{Eto}. These equations may be regarded as extensions of the classical master equation (\ref{a1.6})
governing the Abelian Higgs vortices. We shall show that, unlike the equations arising in the Weinberg--Salam electroweak theory
\cite{BT,CL1,SYw1,Ta} and the Chern--Simons theory \cite{CY,Ta0,Ta1,Ybook}, the elegant structures of these master equations allow us to achieve a complete understanding of their solutions
in terms of explicitly stated conditions. In the next section, we describe the master equations
obtained in \cite{Eto}, to be studied here. We state our main existence and uniqueness results in \S 2 and in the subsequent sections, \S 3 -- \S 5, we give the proofs.
Our methods here are centered around a direct minimization procedure, an idea initiated in the recent work \cite{Lieb-Yang}.  We note that, while direct
minimization is convenient to implement for problems over a doubly periodic domain (due to compactness), it is not an easy task to carry on for problems over
the full plane (due to loss of compactness). In \S 5, we elaborate on such a problem and overcome the difficulty by establishing 
some coercive estimates for the action functionals. We shall see that the positive and negative parts of a trial function need to be treated separately and
the coercive lower bounds for them are rather different. We anticipate that this new method will be useful in dealing with other problems of similar difficulties.
In \S 7 and \S 8, we study and describe various degenerate situations.

\section{The master equations and existence theorems}
\setcounter{equation}{0}

The first master equation  obtained in \cite{Eto} is a scalar equation of the form
\be \label{1.1}
\Delta u=\lm (m|A|^2\e^{mu}+n|B|^2\e^{nu}-\xi)+4\pi\sum_{s=1}^N\delta_{p_s},
\ee
where $\xi>0$, $(A,B)\in\bfC^2$ is a constant vector which lies in the elliptical vacuum manifold (denoted by $M$) defined by the equation
\be \label{1.2}
m|A|^2+n|B|^2=\xi,
\ee
and the set of points $\{p_1,\cdots,p_N\}$, where the $p$'s are not necessarily distinct, describes the locations of $N$ vortices as in the Abelian Higgs case, (\ref{a1.6}). There are two problems of interest:
the solutions of (\ref{1.1}) defined over a doubly (gauge) periodic domain $\Om$, originally formulated in \cite{tH0}
and elaborated in concrete situations in \cite{CY,SYw1,WY}, and those over the full plane $\bfR^2$. In the latter situation, the vacuum manifold (\ref{1.2}) leads us
to the following natural boundary condition for $u$:
\be \label{1.3}
u\to0\quad\mbox{as }|x|\to\infty.
\ee

Our existence theorem for (\ref{1.1}) may be stated as follows.

\begin{theorem}\label{theorem2.1} Consider the multiple vortex equation (\ref{1.1}) with arbitrarily prescribed points 
$p_1,\cdots,p_N$, and assume that (\ref{1.2}) holds subject to the non-degeneracy condition $m,n\geq0$, and $\xi>0$. Then

(i) a solution exists over a doubly periodic domain $\Om$ if and only if
\be \label{2.4}
\frac{4\pi N}\lm<\xi|\Om|;
\ee

(ii) a solution over the full plane $\bfR^2$ subject to the boundary condition (\ref{1.3}) always exists and tends to zero exponentially fast;

(iii) in both cases, the solution $u$ is unique and satisfies
\be\label{bis}\begin{array}{ll}
&u<0,\\
& m|A|^2\|{\rm{\e}}^{mu}-1\|_1+n|B|^2\|{\rm{\e}}^{nu}-1\|_1=\frac{4\pi}\lm N,\end{array}
\ee
where $\|\cdot\|_1$ denotes the $L^1$-norm over $\Om$ or $\bfR^2$ respectively.

\end{theorem}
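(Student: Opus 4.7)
The plan is to convert the distributional equation (\ref{1.1}) into a regular semilinear elliptic equation via the substitution $u=u_0+v$, where $u_0$ is a fixed background function encoding the $\delta$-sources: in the doubly periodic case, $u_0$ is chosen with $\Delta u_0=-4\pi N/|\Om|+4\pi\sum_s\delta_{p_s}$, and in the planar case as a smoothed logarithm, for instance $u_0(x)=\sum_s\ln(|x-p_s|^2/(1+|x-p_s|^2))$, so that $\Delta u_0=4\pi\sum_s\delta_{p_s}-g$ with $g$ smooth, nonnegative and integrable on $\bfR^2$. The equation for $v$ is then the Euler--Lagrange equation of the strictly convex action functional
\[
I(v)=\frac{1}{2}\int|\nabla v|^2+\lm\int\left[|A|^2\left(\e^{m(u_0+v)}-1\right)+|B|^2\left(\e^{n(u_0+v)}-1\right)\right]+\int h\,v,
\]
where $h=4\pi N/|\Om|-\lm\xi$ in the periodic case and $h=g-\lm\xi$ in the planar case. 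The necessity of (\ref{2.4}) is immediate: integrating (\ref{1.1}) over $\Om$ gives $\lm m|A|^2\int\e^{mu}+\lm n|B|^2\int\e^{nu}=\lm\xi|\Om|-4\pi N$, and the left-hand side is strictly positive because $m|A|^2+n|B|^2=\xi>0$ forces at least one of $m|A|^2,n|B|^2$ to be positive.

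For sufficiency on $\Om$, I would minimize $I$ over $H^1(\Om)$ by the standard splitting $v=c+w$ with $\overline w=0$. Poincar\'e's inequality controls $w$ through $\frac{1}{2}\|\nabla w\|_2^2$, Jensen's inequality yields $\int\e^{m(u_0+w)}\geq|\Om|\e^{m\overline{u_0}}$, and the residual dependence on $c$ takes the form $\alpha\e^{mc}+\beta\e^{nc}+\gamma c$ with $\alpha,\beta\geq 0$ not both zero and, precisely by (\ref{2.4}), $\gamma=4\pi N-\lm\xi|\Om|<0$. Hence this function of $c$ is coercive, so $I$ itself is coercive on $H^1(\Om)$, bounded below, weakly lower semicontinuous and strictly convex, which yields a unique minimizer and hence the desired solution.

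The hard part is the planar problem, where compactness is lost and the appropriate function space for $v$ is not obvious. Following the hint in the introduction, I would decompose the trial function into its positive and negative parts $v=v_+-v_-$ and work in a class requiring $\nabla v\in L^2(\bfR^2)$ together with enough decay so that $\e^{m(u_0+v)}-1$ and $\e^{n(u_0+v)}-1$ lie in $L^1(\bfR^2)$. The crucial step is to establish two separate coercive lower bounds for $I$, exploiting that the convex function $\e^t-1-t$ behaves very differently for $t>0$ (exponential growth) and $t<0$ (linear growth in $|t|$): this should give an $L^2$-type control on $v_-$ directly from the integral term and, in combination with a Trudinger--Moser type inequality on $\bfR^2$, an $H^1$-type control on $v_+$ from $\|\nabla v\|_2^2$ and the integral term together. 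Once both coercivity estimates are in place, a minimizing sequence is bounded, weak limits combined with Fatou's lemma produce a minimizer, elliptic regularity makes it classical, and exponential decay at infinity follows from a standard barrier comparison with an exponentially decaying supersolution.

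Uniqueness in both cases follows from the strict convexity of $I$, or equivalently by subtracting two solutions $u_1,u_2$, multiplying by $u_1-u_2$, integrating by parts, and using the strict monotonicity $(\e^{ma}-\e^{mb})(a-b)\geq 0$. For the sign $u<0$, I would apply the maximum principle at an interior maximum $x_0$, which exists away from the $p_s$: $\Delta u(x_0)\leq 0$ together with $m|A|^2+n|B|^2=\xi$ forces $m|A|^2\e^{mu(x_0)}+n|B|^2\e^{nu(x_0)}\leq\xi$, whence $u(x_0)\leq 0$, and the strong maximum principle upgrades this to strict inequality. The integral identity in (\ref{bis}) is then obtained by integrating (\ref{1.1}) (with $\int\Delta u=0$ on $\Om$ and on $\bfR^2$ by decay), substituting $m|A|^2+n|B|^2=\xi$ to rewrite the integrand in terms of $\e^{mu}-1$ and $\e^{nu}-1$, and finally using $u<0$ to replace the resulting signed integrals by $L^1$-norms.
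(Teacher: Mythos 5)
Your treatment of the doubly periodic case is essentially the paper's own proof: the same background function, the same integration of the equation for necessity, and the same Jensen--Poincar\'e coercivity argument for sufficiency. One small caution there: coercivity of $\alpha\e^{mc}+\beta\e^{nc}+\gamma c$ as $c\to+\infty$ requires $\alpha m+\beta n>0$, not merely that $\alpha,\beta$ are not both zero (take $m=0$ and $\beta=0$ to see the failure); this is guaranteed here by $m|A|^2+n|B|^2=\xi>0$, but your phrasing elides the point. For the plane, the paper first uses (\ref{1.2}) to rewrite the nonlinearity as $m|A|^2(\e^{mu}-1)+n|B|^2(\e^{nu}-1)$ and then invokes the classical Jaffe--Taubes argument for the resulting functional (\ref{I0}), noting that the direct-minimization method of \S 5 also applies; your proposal takes that second route.

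The genuine gap is in the setup of the planar functional. As written, with $h=g-\lm\xi$, your $I$ is not well defined on any space in which you could hope to minimize: $\int\lm\xi v$ diverges for generic $v$ with $\nabla v\in L^2(\bfR^2)$, and $\int\bigl(\e^{m(u_0+v)}-1\bigr)\,\dd x$ diverges even at $v=0$ because $1-\e^{mu_0}$ decays only like $|x|^{-2}$ and hence lies in $L^p(\bfR^2)$ for $p>1$ but not in $L^1(\bfR^2)$. The two divergences must be cancelled against each other \emph{before} the functional is posed, which is exactly what the paper's regrouping accomplishes: use the vacuum identity to eliminate the constant $\xi$, then write the integrand as $\e^{mu_0}(\e^{mv}-1-mv)+(\e^{mu_0}-1)mv$, each summand being integrable for $v\in W^{1,2}(\bfR^2)$ by (\ref{5.8bis}). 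Your alternative of restricting to a nonlinear class of $v$ with ``enough decay so that $\e^{m(u_0+v)}-1\in L^1$'' would destroy the weak compactness and convexity structure you later rely on. A second, smaller inaccuracy: the coercive control of $v_+$ has nothing to do with a Trudinger--Moser inequality, since for $v\geq0$ one has the pointwise bound $\e^v-1-v\geq v^2/2$; the actual difficulty, addressed in Lemma \ref{lemma5.1}, is that the weight $\e^{mu_0}$ vanishes at the vortex points and is only of size $\mu^{-N}$ near them, which the paper handles by splitting $\bfR^2$ into $B_{R_0}$ and its complement and applying a Poincar\'e inequality on the ball. The remaining items --- uniqueness by strict convexity, $u<0$ by the maximum principle using $\xi=m|A|^2+n|B|^2$, and the $L^1$ identity in (\ref{bis}) by integrating the equation (with a cutoff or decay argument to kill $\int\Delta u$ on $\bfR^2$) --- are correct and consistent with the paper.
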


The second and third master equations obtained in \cite{Eto} are systems of two nonlinear equations of the form
\bea 
\Delta u_1&=&\lm_1 (m|A|^2\e^{mu_1}+|B|^2\e^{u_1+u_2}+|C|^2\e^{u_1-u_2}-\xi_1)+4\pi\sum_{s=1}^{N_1}\delta_{p_{1,s}},\label{4}\\
\Delta u_2&=&\lm_2 (|B|^2\e^{u_1+u_2}-|C|^2\e^{u_1-u_2}-\xi_2)+4\pi\sum_{s=1}^{N_2}\delta_{p_{2,s}},\label{5}
\eea
where $m=1$ or $2$, respectively,  $\lm_1,\lm_2,\xi_1,\xi_2>0$ are all fixed parameters, and $(A,B,C)\in \bfC^3$ satisfies 
\be \label{2.7}
m|A|^2+|B|^2+|C|^2=\xi_1,\quad |B|^2-|C|^2=\xi_2.
\ee 

The vanishing of either one of the parameters $B$ and $C$ makes the system of equations (\ref{4})--(\ref{5}) degenerate into the Abelian model (\ref{a1.6}), as will be discussed
in \S 7 and \S 8.

Thus, to capture the true new features of (\ref{4})--(\ref{5}), we shall discuss it first under the non-degeneracy conditions
\be \label{ndc}
B\neq0\quad\mbox{and}\quad C\neq0.
\ee
Notice that, for $C\neq0$,  we subject the sets of vortex points  to the condition
\be \label{7}
\{p_{2,1},\cdots,p_{2,{N_2}}\}\subset \{p_{1,1},\cdots,p_{1,N_1}\},
\ee
and impose the natural restriction
\be \label{N1N2}
N_2\leq N_1,
\ee
which ensure the regularity of the quantity $\e^{u_1-u_2}$.

From the vacuum manifold (\ref{2.7}), we see that for a solution $(u_1,u_2)$ of (\ref{4})--(\ref{5}) over $\bfR^2$, the natural boundary condition 
is expressed as follows:
\be \label{bc}
u_1,u_2\to0\quad\mbox{as }|x|\to\infty.
\ee 

On the other hand, the vacuum manifold (\ref{2.7}) plays no role as far as the solvability of the system of equations (\ref{4})--(\ref{5}) is concerned under periodic
boundary conditions with a given cell domain $\Om$. In this case, we identify necessary and sufficient conditions for the existence of a periodic solution that impose
upper bounds on the vortex numbers $N_1$ and $N_2$ (see (\ref{cd1}) and (\ref{cd2}) below), a feature common to other self-dual periodic vortex problems (cf. \cite{Ta1,Ybook}).
Actually, the vacuum constant enters to ensure suitable pointwise estimates on the solution, consistently with their physical interpretation.

More precisely, we prove 

\begin{theorem}\label{theorem2.2} Consider the master equations (\ref{4}) and (\ref{5}) for any integer $m>0$ subject to the non-degeneracy condition (\ref{ndc}).
Then

(i) a solution over a doubly periodic domain $\Om$ exists if and only if the conditions
\bea
4\pi\left(\frac{N_1}{\lm_1}+\frac{N_2}{\lm_2}\right)&<&(\xi_1+\xi_2)|\Om|,\label{cd1}\\
4\pi\left(\frac{N_1}{\lm_1}-\frac{N_2}{\lm_2}\right)&<&(\xi_1-\xi_2)|\Om|,\label{cd2}
\eea
 hold simultaneously;

(ii) a solution over the full $\bfR^2$ subject to the boundary condition (\ref{bc}) always exists provided (\ref{2.7}) holds;

(iii) in both cases above, when a solution exists, it must be unique;

(iv) furthermore, if we assume (\ref{2.7}) together with the condition
\be \label{2.15x}
\lm_2>\lm_1,\quad\xi_2>0,\quad\mbox{{and}}\quad m|A|^2<\min\left\{1,\frac2m\right\}\left(\frac{\lm_2}{\lm_1}-1\right)|C|^2,
\ee
then
\be \label{x2.16}
u_1\pm u_2<0,\quad u_1<0,
\ee 
and the following ``quantization" property hold:
\bea 
m|A|^2\|1-{\rm{\e}}^{mu_1}\|_1+2|B|^2\|1-{\rm{\e}}^{u_1+u_2}\|_1&=&4\pi\left(\frac{N_1}{\lm_1}+\frac{N_2}{\lm_2}\right),\label{q1}\\
m|A|^2\|1-{\rm{\e}}^{mu_1}\|_1+2|C|^2\|1-{\rm{\e}}^{u_1-u_2}\|_1&=&4\pi\left(\frac{N_1}{\lm_1}-\frac{N_2}{\lm_2}\right),\label{q2}
\eea
where $\|\,\|_1$ denotes the $L^1$-norm, over $\Om$ or $\bfR^2$, accordingly.
\end{theorem}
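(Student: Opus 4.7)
The plan is to treat (\ref{4})--(\ref{5}) variationally after absorbing the Dirac sources via Green's functions. In the periodic case I introduce background functions $u_j^0$ satisfying $\Delta u_j^0=-\frac{4\pi N_j}{|\Om|}+4\pi\sum_{s}\delta_{p_{j,s}}$, write $u_j=u_j^0+w_j$, and observe that the resulting smooth system for $(w_1,w_2)$ is the Euler--Lagrange equation of the action
\be
J(w_1,w_2)=\int_\Om\left\{\frac{|\nabla w_1|^2}{2\lm_1}+\frac{|\nabla w_2|^2}{2\lm_2}+|A|^2\e^{m(u_1^0+w_1)}+|B|^2\e^{(u_1^0+u_2^0)+(w_1+w_2)}+|C|^2\e^{(u_1^0-u_2^0)+(w_1-w_2)}\right\}\dd x
\ee
plus a linear functional in $(w_1,w_2)$ with coefficients $-\xi_j+4\pi N_j/(\lm_j|\Om|)$. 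The exponential part of $J$ has Hessian equal to a positive combination of the rank-one matrices associated with the exponent vectors $(m,0)$, $(1,1)$, $(1,-1)$; since $(1,1)$ and $(1,-1)$ are already linearly independent, the non-degeneracy assumption (\ref{ndc}) renders $J$ strictly convex, which settles the uniqueness assertion (iii).

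For the necessity of (\ref{cd1})--(\ref{cd2}) in (i), I integrate (\ref{4}) and (\ref{5}) over $\Om$ and take the combinations $\lm_1^{-1}(\ref{4})\pm\lm_2^{-1}(\ref{5})$; the vacuum identities $\xi_1+\xi_2=m|A|^2+2|B|^2$ and $\xi_1-\xi_2=m|A|^2+2|C|^2$ forced by (\ref{2.7}) give
\be
(\xi_1\pm\xi_2)|\Om|-4\pi\left(\frac{N_1}{\lm_1}\pm\frac{N_2}{\lm_2}\right)=\int_\Om\bigl(m|A|^2\e^{mu_1}+2|B|^2\e^{u_1+u_2}\bigr)\,\dd x>0
\ee
(with $|C|^2$ and $\e^{u_1-u_2}$ replacing $|B|^2$ and $\e^{u_1+u_2}$ for the ``$-$'' combination), which are precisely (\ref{cd1})--(\ref{cd2}). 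For sufficiency I decompose $w_j=c_j+w_j'$ with $\int_\Om w_j'=0$: the Poincar\'e inequality controls $w_j'$ via the Dirichlet terms, and the reduced question becomes coercivity in the constant directions $(c_1,c_2)$. The ``bad cone'' of directions where all three exponential modes stay bounded corresponds to $\alpha\le 0$, $\alpha+\beta\le 0$, $\alpha-\beta\le 0$ for $(c_1,c_2)=t(\alpha,\beta)$ with $t\to+\infty$; a direct inspection of its two extremal rays shows that the linear functional forces $J\to+\infty$ there \emph{iff} (\ref{cd1}) and (\ref{cd2}) hold simultaneously. Direct minimization combined with strict convexity then produces a unique minimizer, hence a solution. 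For (ii), the same functional over $\bfR^2$ loses compactness; following the strategy advertised for \S 5 I would control the positive and negative parts of the trial functions by separate coercive estimates (the negative part being bounded via the vacuum decay forced by (\ref{bc}) and (\ref{2.7})), and extract a minimizer in a weighted Sobolev framework.

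For (iv), I establish the sign conditions (\ref{x2.16}) via the maximum principle applied to $v_\pm:=u_1\pm u_2$: the combinations $\lm_1^{-1}(\ref{4})\pm\lm_2^{-1}(\ref{5})$ (away from the vortex supports) yield elliptic equations for $v_\pm$ whose zero-order parts, once (\ref{2.7}) is invoked, reduce to $(\e^{mu_1}-1)$, $(\e^{u_1+u_2}-1)$ and $(\e^{u_1-u_2}-1)$ with coefficients whose signs are governed by $\lm_2-\lm_1$. The hypothesis (\ref{2.15x}) is calibrated so that on the putative set $\{v_\pm\ge 0\}$ the unfavourable $(\e^{mu_1}-1)$ term is dominated by a sign-favourable combination of the other two (the numerical factors $\min\{1,2/m\}$ and $\lm_2/\lm_1-1$ in (\ref{2.15x}) emerging precisely from this calibration), whence $v_\pm<0$, and then $u_1=\tfrac12(v_++v_-)<0$. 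Once the signs are in hand, the quantization identities (\ref{q1})--(\ref{q2}) are immediate from the integrated relations above: the integrands $m|A|^2(1-\e^{mu_1})$ and $2|B|^2(1-\e^{u_1+u_2})$ (resp.\ $2|C|^2(1-\e^{u_1-u_2})$) are now pointwise non-negative, so those integrals coincide with the $L^1$-norms in (\ref{q1})--(\ref{q2}); in the planar case the same identification is justified by the exponential decay of $u_1,u_2$ at infinity, itself a consequence of standard elliptic estimates applied to the linearisation about the vacuum.

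The principal obstacle is the coercivity analysis on the plane (part (ii)): the interactions $\e^{u_1+u_2}$ and $\e^{u_1-u_2}$ may concentrate in disjoint regions of $\bfR^2$, so neither can be controlled by a single global interpolation, and the separate positive-and-negative-part treatment advertised in the introduction, together with a careful use of the vacuum-induced decay, is essential. Establishing (\ref{x2.16}) under the balanced hypothesis (\ref{2.15x}) is the other delicate step, as it demands a precise algebraic identification of the dominant nonlinear term in the elliptic equation for $v_\pm$.
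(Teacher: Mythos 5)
Your treatment of the doubly periodic case (parts (i) and (iii)) and of the quantization identities follows the paper's route: same background functions and action, necessity of (\ref{cd1})--(\ref{cd2}) by integration, sufficiency by splitting off the means and checking coercivity of the reduced finite-dimensional problem (your ``bad cone'' inspection is the paper's Jensen-inequality bound in disguise), and uniqueness by strict convexity of the functional. However, two of the remaining steps contain genuine gaps.

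First, part (ii). The planar existence is the hardest part of the theorem, and your proposal leaves it at the level of intent. The paper's proof rests entirely on Lemma \ref{lemma5.1}: after splitting $J(v)=J(v^+)+J(v^-)$, the negative part is controlled by the elementary inequality $\e^{-t}-1+t\geq t^2/(2(1+t))$ combined with the Sobolev interpolation (\ref{5.14}), yielding a coercive bound that is only \emph{linear} in $\|v^-\|_2$; the positive part is controlled by the pointwise bounds (\ref{1-1})--(\ref{1-3}) on $H_{0,\mu}$ together with a Poincar\'{e} inequality on a fixed ball, yielding a bound quadratic in $\|v^+\|_2$ but with a small constant of order $\mu^{-N}$, after which $\mu$ must be taken large to absorb the error terms and the source terms $g_j$. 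None of this appears in your sketch; moreover, your stated mechanism for the negative part (``vacuum decay forced by (\ref{bc})'') is not what makes the estimate work, and the paper stays in plain $W^{1,2}(\bfR^2)$ --- no weighted framework is needed. As written, the coercivity (\ref{Jv2}), which is the one genuinely new analytic input of the whole paper, is assumed rather than proved.

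Second, part (iv): your plan is circular. The combinations $\lm_1^{-1}(\ref{4})\pm\lm_2^{-1}(\ref{5})$ produce equations for $u_1/\lm_1\pm u_2/\lm_2$, not for $u_1\pm u_2$; the equations for $u_1\pm u_2$ are (\ref{8.8a})--(\ref{8.8b}), and at a putative nonnegative maximum $\bar x$ of $u_1+u_2$ the zero-order obstruction reduces to the sign of $m|A|^2(\e^{mu_1(\bar x)}-1)-(\lm_2/\lm_1-1)|C|^2(\e^{2u_1(\bar x)}-1)$. Under (\ref{2.15x}) this quantity has a favourable sign \emph{only when one already knows} $u_1(\bar x)<0$; if $u_1(\bar x)\geq0$ the two terms have opposite signs and no contradiction results. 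You propose instead to deduce $u_1<0$ \emph{from} $u_1\pm u_2<0$, which is exactly backwards. The paper first proves Lemma \ref{lemma1}: under the weaker hypothesis (\ref{8.5}), the weighted combinations satisfy $u_1/\lm_1\pm u_2/\lm_2<0$ and, in particular, $u_1<0$ --- itself a delicate multi-step maximum-principle argument in which a chain of inequalities forces the maximum to vanish and the strong maximum principle excludes $u_1/\lm_1\equiv u_2/\lm_2$ --- and only then attacks $u_1\pm u_2$. This intermediate step is the essential missing idea in your calibration of (\ref{2.15x}).
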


\begin{remark} The conditions (\ref{N1N2}) and (\ref{q2}) almost justify the necessity of the condition $\lm_2>\lm_1$.
\end{remark}

For convenience, degenerate cases where $B$ or $C$ will be allowed to vanish will be treated separately in latter sections.

\section{Proof of Theorem 2.1}
\setcounter{equation}{0}

We first consider (\ref{1.1}) over a doubly periodic domain $\Om$. Let $u_0$ be the unique periodic solution over $\Om$ of the problem:
\be 
\left\{\begin{array}{rll}\Delta u_0&=&-\frac{4\pi N}{|\Om|}+4\pi\sum_{s=1}^N\delta_{p_s},\\ \int_\Om u_0\,\dd x&=&0.\end{array}\right.
\ee
Thus, with $u=u_0+v$, the equation (\ref{1.1}) subject to periodic boundary condition becomes:
\be \label{4.2}
\left\{\begin{array}{ll}& \Delta v=\lm (m|A|^2\e^{mu_0 +mv}+n|B|^2\e^{nu_0+nv}-\xi)+\frac{4\pi N}{|\Om|},\quad x\in\Om,\\& v\mbox{ is doubly periodic over }\pa\Om.\end{array}\right.
\ee
Integrating (\ref{4.2}) over $\Om$, we have
\be \label{4.3}
\int_{\Om}\left(m|A|^2\e^{mu_0 +mv}+n|B|^2\e^{nu_0+nv}\right)\,\dd x=\xi|\Om|-\frac{4\pi N}{\lm}\equiv \eta>0,
\ee
which gives us the condition (\ref{2.4}) as a necessary condition for a solution to exist. We now show that (\ref{2.4}) is sufficient to ensure the existence
of a solution.

It is clear that (\ref{4.2}) is the Euler--Lagrange equation of the variational functional
\be \label{4.4}
I(v)=\int_\Om\left\{\frac12|\nabla v|^2+\lm\left(|A|^2\e^{mu_0+mv}+|B|^2\e^{nu_0+nv}\right)-\left(\lm\xi-\frac{4\pi N}{|\Om|}\right)v\right\}\,\dd x,
\ee
 over the Sobolev space 
\be 
H(\Om)=\{v\in W^{1,2}_{\mbox{loc}}(\bfR^2)\,|\, v \mbox{ is doubly periodic with periodic cell domain }\Om\},
\ee 
equipped with the standard scalar product and corresponding norm. 
To proceed further,  we decompose $H(\Om)=\bfR\oplus \dot{H}(\Om)$ with
\be \label{dc}
f=\underline{f}+\dot{f},\quad \underline{f}\in\bfR,\quad \dot{f}\in \dot{H}(\Om)\mbox{ satisfying }\int_\Om\dot{f}\,\dd x=0.
\ee 

By means of the Jensen inequality,
we have:
\bea \label{4.5}
I(v)&=&\int_{\Om}\left\{\frac12|\nabla\dot{v}|^2+\lm\left(|A|^2\e^{mu_0+m\dot{v}+m\ud{v}}+|B|^2\e^{nu_0+n\dot{v}+n\ud{v}}\right)\right\}\,\dd x-\lm\eta\ud{v}\nn\\
&\geq&\frac12\int_{\Om}|\nabla\dot{v}|^2\,\dd x+\lm|\Om|\left(|A|^2\e^{m\ud{v}}+|B|^2\e^{n\ud{v}}\right)-\lm\eta\ud{v}.
\eea
Since for $a\geq0,b\geq0, a+b>0, c>0, m\geq0,n\geq0$, and $am+bn>0$, the function
\be 
\sigma(t)=a\e^{m t}+b\e^{n t}-c t,
\ee
satisfies: $\sigma(t)\to\infty$ as $t\to\pm\infty$, we see from (\ref{4.5}) that $I(\cdot)$ is bounded from below and, by virtue of the Poincar\'{e} inequality,
it is also coercive in $H(\Om)$. By the weak lower semi-continuity of $I$ and its strict convexity, we conclude that $I$ attains the minimum value at a point that
is its unique critical point. So part (i)  of Theorem \ref{theorem2.1} is established.

Now consider (\ref{1.1}) over the full plane. In view of the condition (\ref{1.2}), it may be rewritten as follows:
\be \label{4.9}
\Delta u=\lm m|A|^2(\e^{mu}-1)+\lm n|B|^2(\e^{nu}-1)+4\pi\sum_{s=1}^N\delta_{p_s}.
\ee
We introduce the background function: 
\be \label{3.11}
u_0(x)=\sum_{s=1}^N\ln\left(\frac{|x-p_s|^2}{\mu+|x-p_s|^2}\right),
\ee
satisfying:
\be 
\Delta u_0= 4\pi\sum_{s=1}^N\delta_{p_s}-g_0,\quad g_0=4\mu\sum_{s=1}^N\frac1{(\mu+|x-p_s|^2)^2}.
\ee
Thus, setting $u=u_0+v$, we recast  (\ref{4.9}) into the equation
\be \label{4.10}
\Delta v=\lm m|A|^2(\e^{mu_0+mv}-1)+\lm n|B|^2(\e^{nu_0+nv}-1)+g_0,\quad x\in\bfR^2,
\ee
corresponding to the Euler--Lagrange equation of the functional
\bea \label{I0}
I(v)&=&\frac12\|\nabla v\|^2_2+\lm|A|^2\int_{\bfR^2}\left\{\e^{mu_0}(\e^{mv}-1-mv)+(\e^{mu_0}-1)mv\right\}\,\dd x\nn\\
&&\quad +\lm|B|^2\int_{\bfR^2}\left\{\e^{nu_0}(\e^{nv}-1-nv)+(\e^{nu_0}-1)nv\right\}\,\dd x+\int_{\bfR^2} g_0 v \,\dd x,
\eea 
well defined and of class $C^1$ over the Sobolev space $W^{1,2}(\bfR^2)$. Indeed,
the above functional is similar to that derived for the equation (\ref{a1.6}) in \cite{JT,T1}. Hence, using the method in \cite{JT,T1}, the existence and uniqueness of
a critical point of $I$ in the space $W^{1,2}(\bfR^2)$ can be established. Similarly, (\ref{bis}) follows for example by arguing as in Proposition 3.2.4 of \cite{Ta1}.
Here, we omit the details.

We note that the existence of a critical point of the functional (\ref{I0}) may also be established by our direct minimization method to be presented in \S 5.

\section{Proof of Theorem 2.2: doubly periodic case}
\setcounter{equation}{0}

We now consider the equations (\ref{4}) and (\ref{5}) over a doubly periodic domain $\Om$. Let $u_j^0$ ($j=1,2$) be the unique periodic solutions over $\Om$ of the equations:
\be 
\left\{\begin{array}{rll}\Delta u_j^0&=&-\frac{4\pi}{|\Om|} N_j +4\pi\sum_{s=1}^{N_j}\delta_{p_{j,s}}(x),\\ \int_\Om u^0_j\,\dd x&=&0,\quad j=1,2,\end{array}\right.
\ee
and set $u_3^0=u_1^0-u_2^0$. It is clear  that,  for $C\neq0$, the condition (\ref{7})  ensures the smoothness of the function $\e^{u_3^0}$.

Set $u_j=u_j^0+v_j$ ($j=1,2$). Then the system of equations (\ref{4}) and (\ref{5}) subject to periodic boundary conditions become:
\bea 
\Delta v_1&=&\lm_1 \left(m|A|^2\e^{mu_1^0+mv_1}+|B|^2\e^{u_1^0+u_2^0+v_1+v_2}+|C|^2\e^{u_3^0+v_1-v_2}-\xi_1\right)+\frac{4\pi}{|\Om|}{N_1},\quad\quad\label{2.2}\\
\Delta v_2&=&\lm_2 \left(|B|^2\e^{u^0_1+u^0_2+v_1+v_2}-|C|^2\e^{u_3^0+v_1-v_2}-\xi_2\right)+\frac{4\pi}{|\Om|}{N_2},\label{2.3}\\
 &&v_1, v_2 \mbox{ are doubly periodic on $\pa\Om$}.\nn
\eea
The problem above defines the Euler--Lagrange equations of the functional:
\bea \label{I1a}
&&I(v_1,v_2)=\nn\\
&&\int_\Om\left\{\frac1{2\lm_1}|\nabla v_1|^2+\frac1{2\lm_2}|\nabla v_2|^2+|A|^2\e^{mu_1^0+mv_1}+|B|^2\e^{u_1^0+u_2^0+v_1+v_2}+|C|^2\e^{u_3^0+v_1-v_2}\right.\nn\\
&&\quad\left.-\left(\xi_1-\frac{4\pi N_1}{|\Om|\lm_1}\right)v_1-\left(\xi_2-\frac{4\pi N_2}{|\Om|\lm_2}\right)v_2\right\}\,\dd x,\quad v_j\in H(\Om),\,j=1,2.
\eea
Again, on the basis of the Moser--Trudinger inequality (cf. \cite{Aubin,Fon}), we check that $I\in C^1(H(\Om)\times H(\Om))$. It is also easy to check the weak lower semi-continuity
of $I$.

On the other hand, integrating the  equations (\ref{2.2}) and (\ref{2.3}) over $\Om$, we obtain the natural constraints
\bea 
\int_\Om\left(\frac m2|A|^2\e^{mu_1^0+mv_1}+|B|^2\e^{u_1^0+u_2^0+v_1+v_2}\right)\,\dd x&=&\frac12(\xi_1+\xi_2)|\Om|-2\pi\left(\frac{N_1}{\lm_1}+\frac{N_2}{\lm_2}\right)\nn\\
&\equiv&\eta_1>0,\label{et1}\\
\int_{\Om}\left(\frac m2|A|^2\e^{mu_1^0+mv_1}+|C|^2\e^{u_3^0+v_1-v_2}\right)\,\dd x&=&\frac12(\xi_1-\xi_2)|\Om|-2\pi\left(\frac{N_1}{\lm_1}-\frac{N_2}{\lm_2}\right)\nn\\
&\equiv&\eta_2>0.\label{et2}
\eea 
Thus we deduce the necessity of the conditions (\ref{cd1}) and (\ref{cd2}) for the existence of a solution once we take into account of (\ref{ndc}).

Now assume that the conditions (\ref{cd1}) and (\ref{cd2}) are fulfilled. We show that (\ref{4}) and (\ref{5}) will have a solution.

To this purpose we use the decomposition (\ref{dc}) to obtain the following:
\bea \label{I2a}
&&I(v_1,v_2)-
\int_\Om\left\{\frac1{2\lm_1}|\nabla \dot{v}_1|^2+\frac1{2\lm_2}|\nabla \dot{v}_2|^2+|A|^2\e^{mu_1^0+mv_1}\right\}\,\dd x\nn\\
&&=|B|^2\int_\Om \e^{u_1^0+u_2^0+\ud{v}_1+\ud{v}_2+\dot{v}_1+\dot{v}_2} \,\dd x -\eta_1 (\ud{v}_1+\ud{v}_2)\nn\\
&&\quad +|C|^2\int_\Om \e^{u_3^0+\ud{v}_1-\ud{v}_2+\dot{v}_1-\dot{v}_2}\,\dd x
-\eta_2(\ud{v}_1-\ud{v}_2)\nn\\
&&\geq |B|^2 |\Om|\e^{\ud{v}_1+\ud{v}_2}-\eta_1 (\ud{v}_1+\ud{v}_2)
+|C|^2 |\Om|\e^{\ud{v}_1-\ud{v}_2}
-\eta_2(\ud{v}_1-\ud{v}_2),
\eea
where again we have used the Jensen inequality.

At this point, by the non-degeneracy condition (\ref{ndc}) and the fact that $\eta_1>0$ and $\eta_2>0$ (see (\ref{et1}) and (\ref{et2})), as above we deduce that $I$ is
coercive and bounded from below in $H(\Om)\times H(\Om)$. By weak lower semi-continuity, we see that $I$ attains its minimum value. Since $I$ is strictly convex, the minimum
point gives the unique critical point of $I$, and the first part of Theorem \ref{theorem2.2} is proved.

\section{Proof of Theorem 2.2: planar case}
\setcounter{equation}{0}

In this section, we study the equations (\ref{4}) and (\ref{5}) over the entire plane $\bfR^2$ subject to the topological boundary condition (\ref{bc}). 
As usual, $\|\cdot\|_p$ denotes the norm of $L^p(\bfR^2)$ and $B_r$ denotes the disk in $\bfR^2$ centered at the origin with radius $r>0$. We will
continue to assume the condition (\ref{7}) imposed on the sets of vortex points.
Our goal is to establish an existence and uniqueness theorem as in the doubly periodic situation.

To this end, we introduce the background functions:
\bea 
u_j^0(x)&=&\sum_{s=1}^{N_j}\ln\left(\frac{|x-p_{j,s}|^2}{\mu+|x-p_{j,s}|^2}\right),\quad h_j(x)=\e^{u^0_j(x)},\quad j=1,2,\label{5u}\\
g_j(x)&=& 4\mu\sum_{s=1}^{N_j}\frac1{(\mu+|x-p_{j,s}|^2)^2},\quad j=1,2.\label{5g}
\eea

We now set
\bea 
H_1(x)&=&h_1(x)h_2(x),\quad H_2(x)=\frac{h_1(x)}{h_2(x)},\label{x5.3}\\
 G_1(x)&=&\frac1{\lm_1}g_1(x)+\frac1{\lm_2}g_2(x),\quad G_2(x)=\frac1{\lm_1}g_1(x)-\frac1{\lm_2}g_2(x),\label{x5.4}
\eea
which are all smooth functions.

Setting $u_j=u_j^0+v_j$ ($j=1,2$), we see that the equations (\ref{4}) and (\ref{5})
subject to the boundary condition (\ref{bc}) become the following boundary value problem over $\bfR^2$:
\bea 
\Delta v_1&=& m\lm_1|A|^2 (h_1^m\e^{mv_1}-1)+\lm_1|B|^2 (H_1\e^{v_1+v_2}-1)
+\lm_1|C|^2 (H_2\e^{v_1-v_2}-1)+g_1,\nn\\ \label{eq1}\\
\Delta v_2&=&\lm_2|B|^2 (H_1\e^{v_1+v_2}-1)-\lm_2|C|^2(H_2\e^{v_1-v_2}-1)+g_2,\label{eq2}\\
v_1&\to&0,\quad v_2\to0,\quad |x|\to\infty.\label{bc1}
\eea

As in the doubly periodic situation, we pursue a solution of the problem (\ref{eq1})--(\ref{bc1}) by a minimization procedure. For this purpose, we observe that
the problem enjoys a variational formulation with the action functional:
\bea\label{I1}
&&I(v_1,v_2)=\nn\\
&&\frac1{2\lm_1}\|\nabla v_1\|^2_2+\frac1{2\lm_2}\|\nabla v_2\|^2_2+|A|^2\int_{\bfR^2}(h_1^m[\e^{mv_1}-1-mv_1]+[h_1^m-1] mv_1)\,\dd x\nn\\
&&+|B|^2\int_{\bfR^2}(H_1[\e^{v_1+v_2}-1-(v_1+v_2)]+[H_1-1][v_1+v_2])\,\dd x\nn\\
&&+|C|^2\int_{\bfR^2}(H_2[\e^{v_1-v_2}-1-(v_1-v_2)]+[H_2-1][v_1-v_2])\,\dd x\nn\\
&&+\int_{\bfR^2}\left(\frac1{\lm_1}g_1 v_1+\frac1{\lm_2}g_2 v_2\right)\,\dd x,\label{I}
\eea
where $v_1,v_2\in W^{1,2}(\bfR^2)$.

Notice that,
\be\label{5.8bis}
\begin{array}{ll}&0\leq h_j<1,\quad 0\leq H_j<1;\quad (1-h_j^m), (1-H_j)\in L^p(\bfR^2),\,\forall p>1;\\
& g_j\in L^1(\bfR^2)\cap L^\infty(\bfR^2);\quad j=1,2.\end{array}
\ee
So the functional (\ref{I}) is well defined and of class $C^1$ in $W^{1,2}(\bfR^2)\times W^{1,2}(\bfR^2)$.

Moreover, exactly as for the Abelian case, elliptic estimates and standard regularity theory ensures that a critical point of $I$ 
in $W^{1,2}(\bfR^2)\times W^{1,2}(\bfR^2)$ is a solution of (\ref{eq1})--(\ref{bc1}). Furthermore, it can easily be checked that $I$ is weakly lower semi-continuous and
strictly convex. So it allows at most one critical point.

Fix ${\lm}>\max\{\lm_1,\lm_2\}>0$ and rewrite (\ref{I}) as follows:
\bea \label{I2}
&&I(v_1,v_2)=\nn\\
&&\left(\frac1{\lm_1}-\frac1\lm\right)\frac12\|\nabla v_1\|^2_2+\left(\frac1{\lm_2}-\frac1\lm\right)\frac12\|\nabla v_2\|^2_2+\frac1{4\lm}\left(\|\nabla(v_1+v_2)\|^2_2+\|\nabla(v_1-v_2)\|_2^2\right)\nn\\
&&+|A|^2\int_{\bfR^2}(h_1^m[\e^{mv_1}-1-mv_1]+[h_1^m-1] mv_1)\,\dd x\nn\\
&&+|B|^2\int_{\bfR^2}(H_1[\e^{v_1+v_2}-1-(v_1+v_2)]+[H_1-1][v_1+v_2])\,\dd x\nn\\
&&+|C|^2\int_{\bfR^2}(H_2[\e^{v_1-v_2}-1-(v_1-v_2)]+[H_2-1][v_1-v_2])\,\dd x\nn\\
&&+\int_{\bfR^2}\left(\frac1{\lm_1}g_1 v_1+\frac1{\lm_2}g_2 v_2\right)\,\dd x\nn\\
&=&\left(\frac1{\lm_2}-\frac1\lm\right)\frac12\|\nabla v_2\|^2_2+\left(\frac1{\lm_1}-\frac1\lm\right)\frac1{m^2}J_0(mv_1)
+\frac1{2\lm}\left(J_1(v_1+v_2)+J_2(v_1-v_2)
\right)\nn\\
&&\quad +\int_{\bfR^2}\left(\frac1{\lm_1}g_1 v_1+\frac1{\lm_2}g_2 v_2\right)\,\dd x,
\eea
where 
\bea
J_0(v)&=&\frac12\|\nabla v\|^2_2
+m^2\left(\frac1{\lm_1}-\frac1\lm\right)^{-1}|A|^2\int_{\bfR^2}(h_1^m[\e^v-1-v]+[h_1^m-1]v)\,\dd x,\quad\quad
\label{J0}\\
J_1(v)&=&\frac12\|\nabla v\|^2_2+2\lm|B|^2\int_{\bfR^2}(H_1[\e^v-1-v]+[H_1-1]v)\,\dd x,
\label{J1}\\
J_2(v)&=&\frac12\|\nabla v\|^2_2+2\lm|C|^2\int_{\bfR^2}(H_2[\e^v-1-v]+[H_2-1]v)\,\dd x.\label{J2}
\eea
These functionals may collectively be expressed in the form:
\be \label{Jv}
J(v)=\frac12\|\nabla v\|^2_2+\alpha^2\int_{\bfR^2}\left(H_{0,\mu}(\e^v-1-v)+(H_{0,\mu}-1)v\right)\,\dd x,
\ee
where $\alpha>0$ is a constant and
\be 
u_{0,\mu}(x)=\sum_{s=1}^N \ln\left(\frac{|x-p_s|^2}{\mu+|x-p_s|^2}\right),\quad H_{0,\mu}=\e^{u_0,\mu},
\ee
with $p_1,\cdots,p_N\in\bfR^2$ fixed points repeated  accounting to their multiplicities. We have $0\leq H_{0,\mu}<1$ and $1-H_{0,\mu}\in L^2(\bfR^2)$. Besides, for
\be 
R_0>3\max\{|p_s|\,|\, s=1,\cdots,N\},
\ee
there exists some $c_0=c_0(R_0)>1$, such that for any $\mu\geq1$ we have:
\bea 
H_{0,\mu}(x)&\geq&\frac1{c_0\mu^N},\quad  |x|\geq R_0,\label{1-1}\\
0\leq H_{0,\mu}(x)&\leq&\frac{c_0}{\mu^N},\quad |x|\leq R_0,\label{1-2}\\
\int_{B_{R_0}} H_{0,\mu}(x)\,\dd x&\geq&\frac1{c_0\mu^N}.\label{1-3}
\eea

\begin{lemma}\label{lemma5.1} For the functional defined in (\ref{Jv}), there is a constant $\mu_\alpha>1$ such that for $\mu>\mu_\alpha$ there holds:
\be \label{Jv2}
J(v)\geq\frac14\|\nabla v\|^2_2+\frac\alpha2\|v_-\|+\frac{\alpha^2 b_0}{\mu^N}\|v_+\|^2-C_\mu,\quad v\in W^{1,2}(\bfR^2),
\ee
where $v^+=\max\{ v,0\}$ and $v^-=\max\{-v,0\}$ are the positive and negative parts of $v$, respectively, $b_0>0$ is a constant 
independent of $\mu$ and $C_\mu>0$ is some constant depending on $\mu$.
\end{lemma}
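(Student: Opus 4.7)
The plan is to exploit the convex structure of $J$ by splitting $v = v_+ - v_-$ and analyzing the integrand
\be
H_{0,\mu}(\e^v - 1 - v) + (H_{0,\mu} - 1)v
\ee
separately on the two regions $\{v \geq 0\}$ and $\{v < 0\}$. The elementary bounds $\e^t - 1 - t \geq t^2/2$ for $t \geq 0$ and $\e^{-s} - 1 + s \geq s - 1$ for $s \geq 0$ produce the asymmetry of (\ref{Jv2})---quadratic control on $v_+$ but only linear control on $v_-$. Throughout, the tilt contribution $-\alpha^2\int (1-H_{0,\mu})v\,\dd x$ has a favorable sign on $\{v<0\}$ (where it contributes $\alpha^2\int(1-H_{0,\mu})v_-\,\dd x\geq 0$) and an unfavorable sign on $\{v\geq 0\}$ (where the bad linear piece $-\alpha^2\int(1-H_{0,\mu})v_+\,\dd x$ must be neutralized).

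On $\{v\geq 0\}$, restricting to $\{|x|\geq R_0\}$ and combining $H_{0,\mu}(\e^{v_+}-1-v_+)\geq \frac12 H_{0,\mu} v_+^2$ with (\ref{1-1}) produces a quadratic lower bound $\frac{\alpha^2}{2c_0\mu^N}\int_{|x|\geq R_0} v_+^2\,\dd x$. The bad linear piece is absorbed by the weighted Young inequality $(1-H_{0,\mu})v_+ \leq \vep H_{0,\mu} v_+^2 + C_{\vep}(1-H_{0,\mu})^2/H_{0,\mu}$, whose second term is integrable over $\{|x|\geq R_0\}$ at cost $O(\mu^N)$ (absorbed into $C_\mu$) because $\|1-H_{0,\mu}\|_2^2<\infty$ by (\ref{5.8bis}). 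On $\{v<0\}$, using the algebraic identity
\be
H_{0,\mu}(\e^{-v_-} - 1 + v_-) + (1-H_{0,\mu})v_- = v_- - H_{0,\mu}(1-\e^{-v_-}),
\ee
and splitting $v_-$ according to its magnitude, one combines the quadratic lower bound valid for small $v_-$ with the linear lower bound valid for large $v_-$ to produce the linear term $\frac{\alpha^2}{2}\|v_-\|$ up to controlled errors; the coefficient $\alpha/2$ (rather than $\alpha^2/2$) is then obtained by a Young-type pairing with a slice of the Dirichlet reserve $\tfrac14\|\nabla v\|_2^2$.

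The main obstacle is handling both $v_+$ and $v_-$ on the inner disk $B_{R_0}$, where $H_{0,\mu}\leq c_0/\mu^N$ by (\ref{1-2}) provides essentially no pointwise coercivity, and the lack of a global Poincar\'e inequality on $\bfR^2$ prevents dominating $L^2$-type norms by the Dirichlet term alone. Here one uses the retained $\tfrac14\|\nabla v\|_2^2$ together with the Sobolev embedding $W^{1,2}(B_{R_0}) \hookrightarrow L^q(B_{R_0})$ to absorb the local linear pieces $\int_{B_{R_0}}(1-H_{0,\mu})|v|\,\dd x$ via Young's inequality at the cost of a constant $C_\mu$. Combining the inner and outer estimates and choosing $\mu_\alpha$ large enough for all threshold inequalities above to hold simultaneously yields (\ref{Jv2}) with $b_0>0$ independent of $\mu$.
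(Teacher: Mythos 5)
Your overall architecture (split $J(v)=J(v_+)+J(v_-)$, quadratic coercivity for $v_+$, linear for $v_-$, with the sign of the tilt term $-\alpha^2\int(1-H_{0,\mu})v$ driving the asymmetry) matches the paper's, but there are two genuine gaps, one of which is the heart of the lemma. For $v\geq0$ you obtain the weighted quadratic term only on $\{|x|\geq R_0\}$, and your proposed treatment of the inner disk does not produce the missing piece $\frac{\alpha^2 b_0}{\mu^N}\|v_+\|^2_{L^2(B_{R_0})}$ at all; worse, the mechanism you invoke to handle $B_{R_0}$ is circular. The embedding $W^{1,2}(B_{R_0})\hookrightarrow L^q(B_{R_0})$ bounds $\|v_+\|_{L^q(B_{R_0})}$ by $\|\nabla v_+\|_{L^2(B_{R_0})}+\|v_+\|_{L^2(B_{R_0})}$, and the second summand is exactly the quantity that is not yet controlled: a trial function equal to a huge constant $M$ on $B_{R_0}$ has negligible local gradient, so nothing in your outer estimate plus ``Dirichlet reserve'' rules it out. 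The paper closes this by writing $v=M+v_1$ on $B_{R_0}$ with $M$ the mean value, using the Poincar\'e inequality for $v_1$, and --- crucially --- extracting coercivity in $M^2$ from the lower bound $\int_{B_{R_0}}H_{0,\mu}\,\dd x\geq 1/(c_0\mu^N)$ of (\ref{1-3}); the dangerous cross terms carry the small factor $c_0/\mu^N$ from (\ref{1-2}) and are absorbed into $\frac12\|\nabla v\|_2^2$ once $\mu$ is large. (Alternatively one could recover $\|v_+\|^2_{L^2(B_{R_0})}$ from $\|\nabla v_+\|_2^2+\|v_+\|^2_{L^2(\bfR^2\setminus B_{R_0})}$ via a Poincar\'e inequality on an annulus, but some such bridge between the inner disk and the coercive exterior region is indispensable and is absent from your sketch.)

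The second gap is in the negative part. Your magnitude splitting of $v_-$ (quadratic below threshold, linear above) yields in effect $\alpha^2\int\frac{v_-^2}{1+v_-}\,\dd x$, which is an $L^1$-type quantity and does \emph{not} dominate $c\,\|v_-\|_2$: take $v_-=M$ on a set of measure $\vep$ to see the ratio degenerate like $\sqrt\vep$. The conversion to the $L^2$ norm claimed in (\ref{Jv2}) is not a Young-type pairing; it requires the Sobolev inequality $\|v\|_4^4\leq 2\|v\|_2^2\,\|\nabla v\|_2^2$, which via Cauchy--Schwarz gives $\|v\|_2^2\leq 2(1+2\|\nabla v\|_2^2)\int\frac{v^2}{(1+|v|)^2}\,\dd x$, and only then does minimizing $t\mapsto t+\alpha^2\|v\|_2^2/(1+t)$ against the Dirichlet term produce $\frac{\alpha}{2}\|v_-\|_2$ (note the first power of $\alpha$, which your ``$\frac{\alpha^2}{2}\|v_-\|$ up to controlled errors'' misses). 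Also note that for $v\leq0$ no lower bound on $H_{0,\mu}$ is needed at all, since $(H_{0,\mu}-1)(\e^v-1)\geq0$; your identity $v_--H_{0,\mu}(1-\e^{-v_-})$ is correct but leads you toward an $L^1$ bound rather than the $L^2$ bound actually asserted.
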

\begin{proof} Since $J(v)=J(v^+)+J(v^-)$, we can split the proof  into two cases.

Case 1. $v\leq0$ a.e. in $\bfR^2$.

Recall the Sobolev inequality
\be \label{Sobolev}
\int_{\bfR^2} v^4\,\dd x\leq 2\left(\int_{\bfR^2} v^2\,\dd x\right)\left(\int_{\bfR^2}|\nabla v|^2\,\dd x\right),\quad v\in W^{1,2}(\bfR^2).
\ee
Thus, we have
\bea \label{u4}
\left(\int_{\bfR^2} v^2\,\dd x\right)^2&=&\left(\int_{\bfR^2}\frac{|v|}{1+|v|}(1+|v|)|v|\,\dd x\right)^2\nn\\
&\leq&2\int_{\bfR^2}\frac{v^2}{(1+|v|)^2}\,\dd x\int_{\bfR^2}(v^2+v^4)\,\dd x.
\eea
Inserting (\ref{Sobolev}) into (\ref{u4}), we have
\be\label{5.14}
\|v\|^2_2\leq 2\left(1+2\|\nabla v\|^2_2\right)\int_{\bfR^2}\frac{v^2}{(1+|v|)^2}\,\dd x,\quad v\in W^{1,2}(\bfR^2).
\ee
Besides, from the well-known inequality:
\be\label{5.15}
1-\e^{-t}\geq \frac t{1+t},\quad\forall t\geq0,
\ee
we deduce that,
\be \label{5.16}
\e^{-t}-1+t=\int_0^t(1-\e^{-\tau})\,\dd\tau\geq\int_0^t\frac\tau{1+\tau}\,\dd\tau
\geq\frac{t^2}{2(1+t)},\quad\forall t\geq0.
\ee
Hence,  for $v\leq0$, we obtain the lower bound:
\bea
H_{0,\mu}(\e^v-1-v)+(H_{0,\mu}-1)v&=&(H_{0,\mu}-1)(\e^v-1)+(\e^v-1-v)\nn\\
&\geq&\e^v-1-v\nn\\
&\geq&\frac12\frac{v^2}{1+|v|}.
\eea
Consequently, using (\ref{5.14}), we have
\bea 
J(v)&\geq&\frac12\|v\|^2_2+\frac{\alpha^2}2\int_{\bfR^2}\frac{v^2}{1+|v|}\,\dd x\nn\\
&\geq&\frac12\|v\|^2_2+\frac{\alpha^2}2\int_{\bfR^2}\frac{v^2}{(1+|v|)^2}\,\dd x\nn\\
&\geq&\frac12\left(\|\nabla v\|^2_2+\frac{\alpha^2}2\frac{\|v\|^2_2}{(1+2\|\nabla v\|^2_2)}\right)\nn\\
&\geq&\frac14\|\nabla v\|^2_2+\frac14\left(\|\nabla v\|^2_2+\frac{\alpha^2\|v\|^2_2}{1+\|\nabla v\|^2_2}\right).
\eea
Therefore, we can minimize the function
\be 
\varphi(t)=t+\frac{\alpha^2\| v\|^2_2}{1+t},\quad t>0,
\ee
to find $\varphi(t)\geq 2\alpha\|v\|_2-1$, and conclude:
\be \label{5.25}
J(v)\geq\frac14 \|\nabla v\|^2_2+\frac\alpha2\|v\|_2-1,\quad v\leq0.
\ee

Case 2: $v\geq0$ a.e. in $\bfR^2$.

First, note that 
\be 
\int_{\bfR^2}\left(H_{0,\mu}(\e^v-1-v)+(H_{0,\mu}-1)v\right)\,\dd x\geq \int_{\bfR^2}\left(\frac12H_{0,\mu} v^2+(H_{0,\mu}-1)v\right)\,\dd x,
\ee
and from (\ref{1-1}), we get
\bea \label{5.27}
&&\int_{|x|\geq R_0}\left(\frac12H_{0,\mu} v^2+(H_{0,\mu}-1)v\right)\,\dd x\nn\\
&&\geq\frac1{2c_0\mu^N}\|v\|^2_{L^2(\bfR^2\setminus B_{R_0})}-\|1-H_{0,\mu}\|_{L^2(\bfR^2\setminus B_{R_0})}\|v\|_{L^2(\bfR^2\setminus B_{R_0})}\nn\\
&&\geq\frac1{4c_0\mu^N}\|v\|^2_{L^2(\bfR^2\setminus B_{R_0})}-C_{1,\mu},
\eea
for a suitable constant $C_{1,\mu}>0$.

On the other hand, if we set
\be 
M=\frac1{|B_{R_0}|}\int_{B_{R_0}} v\,\dd x,\quad v_1=v-M,
\ee
we find:
\bea 
&&\int_{B_{R_0}}\left(\frac12 H_{0,\mu} v^2+(H_{0,\mu}-1)v\right)\,\dd x\nn\\
&&=\frac12\int_{B_{R_0}} H_{0,\mu} v_1^2\,\dd x+\int_{B_{R_0}} H_{0,\mu}v_1\,\dd x+\frac{M^2}2\int_{B_{R_0}}H_{0,\mu}\,\dd x\nn\\
&&\quad+M\left(\int_{B_{R_0}} H_{0,\mu} v_1\,\dd x+\int_{B_{R_0}} (H_{0,\mu}-1)\,\dd x\right)\nn\\
&&\geq\frac12\int_{B_{R_0}} H_{0,\mu} v_1^2\,\dd x+\int_{B_{R_0}} H_{0,\mu}v_1\,\dd x+\frac{M^2}4\int_{B_{R_0}}H_{0,\mu}\,\dd x\nn\\
&&\quad -\left(\int_{B_{R_0}} H_{0,\mu}\,\dd x\right)^{-1}\left(\int_{B_{R_0}}(H_{0,\mu} v_1+H_{0,\mu}-1)\,\dd x\right)^2\nn\\
&&\geq\frac12\int_{B_{R_0}} H_{0,\mu} v_1^2\,\dd x+\int_{B_{R_0}} H_{0,\mu}v_1\,\dd x+\frac{M^2}4\int_{B_{R_0}}H_{0,\mu}\,\dd x\nn\\
&&\quad -\left(\int_{B_{R_0}} H_{0,\mu}\,\dd x\right)^{-1}\left(\left[\int_{B_{R_0}}H_{0,\mu} v_1\,\dd x\right]^2-2\left[\int_{B_{R_0}}H_{0,\mu} v_1\,\dd x\right]\times\right.\nn\\
&&\quad\times\left.\left[
\int_{B_{R_0}}(1-H_{0,\mu})\,\dd x\right]+\left[\int_{B_{R_0}}(1-H_{0,\mu})\,\dd x\right]^2\right)\nn\\
&&\geq\frac{M^2}4\int_{B_{R_0}} H_{0,\mu}\,\dd x+\left(\int_{B_{R_0}} H_{0,\mu} v_1\,\dd x\right)\left(2\left[\int_{B_{R_0}} H_{0,\mu}\,\dd x\right]^{-1}-1\right)\nn\\
&&\quad -\frac12\left(\int_{B_{R_0}} H_{0,\mu}\,\dd x\right)^{-1}\left(\int_{B_{R_0}} H_{0,\mu}v_1\,\dd x\right)^2 -C_{2,\mu}\nn\\
&&\geq\frac{M^2}4\int_{B_{R_0}} H_{0,\mu}\,\dd x-c_\mu\|v_1\|_{L^2(B_{R_0})}\nn\\
&&\quad-\frac12\left(\int_{B_{R_0}} H_{0,\mu}\,\dd x\right)^{-1}\left(\int_{B_{R_0}} H_{0,\mu}^2\,\dd x\right)\|v_1\|^2_{L^2(B_{R_0})} -C_{2,\mu},
\eea
for some suitable constants $c_\mu, C_{2,\mu}>0$, after using the inequality
\be 
\int_{B_{R_0}} H_{0,\mu}v_1^2\,\dd x \geq\left(\int_{B_{R_0}} H_{0,\mu}\,\dd x\right)^{-1}\left(\int_{B_{R_0}} H_{0,\mu}v_1\,\dd x\right)^2.
\ee
As a consequence, using (\ref{5.27}), for $v\geq0$, we have
\bea \label{Jv1}
J(v)&\geq&\frac12\|\nabla v\|^2_2+\alpha^2 \left(\frac14\int_{B_{R_0}} H_{0,\mu}\,\dd x\right)M^2\nn\\
&&-\alpha^2\left(\int_{B_{R_0}} H_{0,\mu}\,\dd x\right)^{-1}\left(\int_{B_{R_0}} H_{0,\mu}^2\,\dd x\right)\|v_1\|^2_{L^2(B_{R_0})}\nn\\
&& +\frac{\alpha^2}{4c_0\mu^N}\|v\|^2_{L^2(\bfR^2\setminus B_{R_0})}
-C_{3,\mu},
\eea
for some constant $C_{3,\mu}>0$. Now notice that
\be 
\left(\int_{B_{R_0}} H_{0,\mu}\,\dd x\right)^{-1}\left(\int_{B_{R_0}} H_{0,\mu}^2\,\dd x\right)\leq \|H_{0,\mu}\|_{L^\infty(B_{R_0})}\leq \frac{c_0}{\mu^N},
\ee
in view of (\ref{1-2}). On the other hand, by the Poincar\'{e} inequality, we have
\be \label{p1}
\|v_1\|^2_{L^2(B_{R_0})}\leq a_0 R^2_0\|\nabla v_1\|^2_{L^2(B_{R_0})},
\ee
which leads us to obtain:
\be \label{p2}
\|v\|^2_{L^2(B_{R_0})}\leq a_0 R^2_0\left(\|\nabla v\|^2_{L^2(B_{R_0})}+M^2\right).
\ee
Thus, by (\ref{1-2}), (\ref{1-3}), (\ref{Jv1}), (\ref{p1}), and (\ref{p2}), we can find constants $b_0, b_1>0$, depending on $R_0$ but independent of $\mu$, such that
\be \label{5.35}
J(v)\geq\left(\frac12-\frac{\alpha^2 b_0}{\mu^N}\right)\|\nabla v\|^2_2+\frac{\alpha^2 b_1}{\mu^N}\|v\|^2_2-C_\mu,\quad v\geq0.
\ee

Summarizing Cases 1 and 2 above, i.e., (\ref{5.25}) and (\ref{5.35}), we can find $\mu_\alpha>1$ sufficiently large such that (\ref{Jv2}) holds whenever $\mu\geq\mu_\alpha$.
\end{proof}

We are now ready to prove that the functional $I$ defined in (\ref{I1}) or (\ref{I2}) is coercive. In fact, applying Lemma \ref{lemma5.1}
to the functionals $J_\ell$, $\ell=0,1,2$, defined in (\ref{J0})--(\ref{J2}), we can find constants $a,b>0$ (independent of $\mu$) and $c_\mu>1$, such that for $\mu>0$ sufficiently large
and $N=\max\{N_1+N_2,mN_1\}$, we have
\bea
I(v_1,v_2)&\geq& a\left(\|\nabla v_1\|_2^2+\|\nabla v_2\|^2_2+|A|\|{v_1}_-\|_2+|B|\|(v_1+v_2)_-\|_2+|C|\|(v_1-v_2)_-\|_2\right)\nn\\
&&+\frac b{\mu^N}\left(|A|^2\|{v_1}_+\|^2_2+|B|^2\|(v_1+v_2)_+\|_2^2+|C|^2\|(v_1-v_2)_+\|^2_2\right)\nn\\
&&+\int_{\bfR^2}\left(\frac1{\lm_1}g_1 v_1+\frac1{\lm_2}g_2 v_2\right)\,\dd x -c_\mu.\label{Iv12}
\eea

Recalling (\ref{x5.4}), for $B\neq0, C\neq0$, we can estimate
\bea
\int_{\bfR^2}\left(\frac1{\lm_1}g_1 v_1+\frac1{\lm_2}g_2 v_2\right)\,\dd x&=&\frac12\int_{\bfR^2} G_1 (v_1+v_2)\,\dd x+\frac12\int_{\bfR^2} G_2 (v_1-v_2)\,\dd x\nn\\
&\geq&-\frac12\|G_1\|_2\|v_1+v_2\|_2 -\frac12\|G_2\|\|v_1-v_2\|_2\nn\\
&\geq&-\frac c\mu\left(\|v_1+v_2\|_2+\|v_1-v_2\|_2\right)\nn\\
&\geq&-\frac c\mu\left(\|(v_1+v_2)_-\|_2+\|(v_1-v_2)_-\|_2\right)\nn\\
&&-\frac{\vep}{\mu^N}\left(\|(v_1+v_2)_+\|_2^2+\|(v_1-v_2)_+\|_2^2\right)-C_{\vep,\mu},\label{5.40}
\eea
where $\vep>0$ is arbitrarily small. Incidentally, also notice that in the case $C=0$ but $A\neq0, B\neq0$, we can estimate
\bea
\int_{\bfR^2}\left(\frac1{\lm_1}g_1 v_1+\frac1{\lm_2}g_2 v_2\right)\,\dd x&=&\int_{\bfR^2}\left(\left[\frac1{\lm_1}g_1-\frac1{\lm_2}g_2\right] v_1+\frac1{\lm_2}g_2[v_1+ v_2]\right)\,\dd x\nn\\
&\geq&-\frac c\mu\left(\|v_1\|_2+\|v_1+v_2\|_2\right),
\eea
and argue  as before to arrive at a lower bound like that in (\ref{5.40}). Similarly, in the case $B=0$ but $A\neq0, C\neq0$, we can  estimate
\bea
\int_{\bfR^2}\left(\frac1{\lm_1}g_1 v_1+\frac1{\lm_2}g_2 v_2\right)\,\dd x&=&\int_{\bfR^2}\left(\left[\frac1{\lm_1}g_1+\frac1{\lm_2}g_2\right] v_1-\frac1{\lm_2}g_2[v_1- v_2]\right)\,\dd x\nn\\
&\geq&-\frac c\mu\left(\|v_1\|_2+\|v_1-v_2\|_2\right),
\eea
so that a similar conclusion holds. In view of these results, we see that (\ref{Iv12}) allows us to obtain some constants $a,b,c,C_0>0$ such that
the following coercive lower bound holds,
\bea \label{5.43}
I(v_1,v_2)&\geq& a\left(\|\nabla v_1\|^2_2+\|\nabla v_2\|^2_2\right)+b\left(\|{v_1}_-\|_2+\|{v_2}_-\|_2\right)\nn\\
&&+c\left(\|{v_1}_+\|_2^2+\|{v_2}_+\|_2^2\right)-C_0,\quad v_1,v_2\in W^{1,2}(\bfR^2).
\eea 

Therefore,  by virtue of (\ref{5.43}), we can use  standard arguments to obtain the existence of a 
global minimum point for $I$ as its unique critical point in $W^{1,2}(\bfR^2)\times W^{1,2}(\bfR^2)$, provided at least two of the constants
$A,B$, and $C$ do not vanish. Hence, the following slight extension of part (ii) of
 Theorem \ref{theorem2.2} holds: 

\begin{theorem} Consider the equations (\ref{4}) and (\ref{5}) over the full plane such that the vacuum constraint (\ref{2.7}) is
satisfied with at least two of the constants $A, B$, and $C$ not vanishing.
Then there always exists a unique solution $(u_1,u_2)$ satisfying the boundary condition $u_1,u_2\to0$ as $|x|\to\infty$.
\end{theorem}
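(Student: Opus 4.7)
The plan is to follow the same variational strategy already deployed in the non-degenerate doubly-periodic and planar cases of Theorem \ref{theorem2.2}, now accommodating the possibility that exactly one of $B$ or $C$ is zero while keeping the other two parameters non-vanishing. As before, after subtracting the background functions $u^0_j$ from (\ref{5u}), the problem reduces to finding a critical point in $W^{1,2}(\bfR^2)\times W^{1,2}(\bfR^2)$ of the functional $I$ in (\ref{I1})/(\ref{I2}), which is of class $C^1$, strictly convex and weakly lower semi-continuous. The only ingredient that genuinely needs to be re-examined is the coercivity estimate (\ref{5.43}), since (\ref{I2}) was built on the premise that all three quadratic-like contributions $J_0, J_1, J_2$ from (\ref{J0})--(\ref{J2}) are present.

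First I would re-use Lemma \ref{lemma5.1} unchanged for those of the functionals $J_0, J_1, J_2$ whose coefficients $|A|, |B|, |C|$ are nonzero, since the lemma only requires $\alpha>0$ together with the decay properties (\ref{1-1})--(\ref{1-3}) of the corresponding $H_{0,\mu}$. In the case $C=0$, $A\neq 0$, $B\neq 0$, I would keep only the two functionals $J_0$ (applied to $m v_1$) and $J_1$ (applied to $v_1+v_2$), which together control $\|\nabla v_1\|_2^2$ and $\|\nabla v_2\|_2^2$ through the identity already exploited in (\ref{I2}) (one has $\frac{1}{2}\|\nabla v_2\|_2^2 = \frac{1}{4}(\|\nabla(v_1+v_2)\|_2^2+\|\nabla(v_1-v_2)\|_2^2) - \frac{1}{2}\|\nabla v_1\|_2^2$ combined with the remaining $\nabla v_1$ term), and the key estimate for the $g$-terms is the rewriting already given in the excerpt, namely
\[
\int_{\bfR^2}\!\left(\tfrac{1}{\lm_1}g_1 v_1+\tfrac{1}{\lm_2}g_2 v_2\right)\dd x = \int_{\bfR^2}\!\left(\bigl[\tfrac{1}{\lm_1}g_1-\tfrac{1}{\lm_2}g_2\bigr]v_1+\tfrac{1}{\lm_2}g_2[v_1+v_2]\right)\dd x,
\]
so that the linear part is bounded below in terms of $\|v_1\|_2$ and $\|v_1+v_2\|_2$ alone. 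Symmetrically, in the case $B=0$, $A\neq 0$, $C\neq 0$, I would group the $g$-terms as $\tfrac{1}{\lm_1}g_1+\tfrac{1}{\lm_2}g_2$ on $v_1$ together with $-\tfrac{1}{\lm_2}g_2(v_1-v_2)$. In each case, the negative parts are absorbed into the linear $\|\cdot\|_2$-terms supplied by Case~1 of Lemma \ref{lemma5.1}, while the positive parts are absorbed into the $\mu^{-N}$-weighted $\|\cdot\|_2^2$-terms from Case~2, after choosing $\mu$ sufficiently large and using the Young inequality exactly as in (\ref{5.40}).

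Having obtained a coercive lower bound of the form (\ref{5.43}) in all three non-degenerate sub-cases (namely $\{A,B,C\}$ all nonzero, or $C=0$, or $B=0$), I would conclude by the direct method: a minimizing sequence is bounded in $W^{1,2}(\bfR^2)\times W^{1,2}(\bfR^2)$, weak limits along subsequences exist, and weak lower semi-continuity yields a minimizer. Strict convexity of $I$ then gives uniqueness of the critical point, and elliptic regularity together with the control on $v_1, v_2$ provided by the coercivity (in particular $\|v_j\|_2 < \infty$ and $\|\nabla v_j\|_2 < \infty$) and the standard decay argument for elliptic equations on $\bfR^2$ with $L^2$ right-hand side ensures $v_1, v_2 \to 0$ at infinity, recovering the boundary condition (\ref{bc1}). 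I expect the main obstacle to be precisely the bookkeeping in the $B=0$ or $C=0$ sub-cases: one must verify that the regrouping of the $g$-terms still produces only linear combinations of the \emph{same} arguments that appear in the surviving $J_\ell$'s, for otherwise Lemma \ref{lemma5.1} would not apply to those arguments and coercivity could fail; the identities above are tailored to ensure this compatibility.
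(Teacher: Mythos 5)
Your proposal is correct and follows essentially the same route as the paper: the decomposition (\ref{I2}) into the functionals $J_0$, $J_1$, $J_2$, the coercivity Lemma \ref{lemma5.1} applied to whichever of them have nonzero coefficient, and the case-dependent regrouping of the linear terms $\int(\frac{1}{\lm_1}g_1v_1+\frac{1}{\lm_2}g_2v_2)\,\dd x$ so that they involve only the arguments ($v_1$, $v_1+v_2$, or $v_1-v_2$) actually controlled by the surviving $J_\ell$'s, followed by the direct method and strict convexity for uniqueness. This is precisely the argument given in \S 5, including the two degenerate regroupings for $C=0$ and $B=0$.
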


\section{Proof of pointwise estimates}
\setcounter{equation}{0}

In this section, we analyze equations (\ref{4}) and (\ref{5}) under the vacuum constraint (\ref{2.7}) so that they take the forms:
\bea 
\Delta u_1&=&\lm_1\left(m|A|^2[\e^{mu_1}-1]+|B|^2[\e^{u_1+u_2}-1]+|C|^2[\e^{u_1-u_2}-1]\right)+4\pi\sum_{s=1}^{N_1}\delta_{p_{1,s}},\quad\label{8.1a}\\
\Delta u_2&=&\lm_2\left(|B|^2[\e^{u_1+u_2}-1]-|C|^2[\e^{u_1-u_2}-1]\right)+4\pi\sum_{s=1}^{N_2}\delta_{p_{2,s}}.\label{8.1b}
\eea
Actually, after some simple manipulation, it is easy to rewrite (\ref{8.1a})--(\ref{8.1b}) as follows:
\bea 
\Delta u_1&=&\lm_1\left(m|A|^2 [\e^{mu_1}-1]+[|B|^2+|C|^2][\e^{u_1}-1]+[|C|^2-|B|^2]\e^{u_1-u_2}[1-\e^{u_2}]\right.\nn\\
&&\left.\quad+ 2|B|^2 \e^{u_1}[\cosh u_2-1]\right)+4\pi\sum_{s=1}^{N_1}\delta_{p_{1,s}},\label{8.2a}\\
\Delta u_2&=&\lm_2\left(2|B|^2 \e^{u_1}\sinh u_2+[|C|^2-|B|^2]\e^{u_1-u_2}[\e^{u_2}-1]+[|C|^2-|B|^2][1-\e^{u_1}]\right)\nn\\
&&\quad+4\pi\sum_{s=1}^{N_2}\delta_{p_{2,s}}.\label{8.2b}
\eea
From (\ref{8.2a})--(\ref{8.2b}), by the maximum principle and with the aid of a continuity method, it is possible to show that for $|B|\leq |C|$ every
periodic solution $(u_1,u_2)$ of the equations satisfies $u_1<0$ and $u_2<0$.

Here we focus on the physically relevant situation where
\be\label{8.3}
|B|>|C|\quad\mbox{i.e.}\quad \xi_2>0.
\ee 
In the following, the solution $(u_1,u_2)$ considered will either be periodic or satisfies (\ref{8.2a})--(\ref{8.2b}) over $\bfR^2$ together with the boundary condition
(\ref{bc}).

Also notice that in the periodic situation considered over the cell domain $\Om$ and under the vacuum constraint (\ref{2.7}), the necessary and sufficient conditions for existence 
take the form:
\bea 
m|A|^2+2|B|^2&>&\frac{4\pi}{|\Om|}\left(\frac{N_1}{\lm_1}+\frac{N_2}{\lm_2}\right),\label{8.4a}\\
m|A|^2+2|C|^2&>&\frac{4\pi}{|\Om|}\left(\frac{N_1}{\lm_1}-\frac{N_2}{\lm_2}\right),\label{8.4b}
\eea
and provide explicit upper bounds on the vortex numbers $N_1$ and $N_2$ in terms of the size of $\Om$.
In addition, notice that (\ref{8.4a})--(\ref{8.4b}) allow $N_1=N_2=0$ (i.e., the absence of vortices) which yields only the trivial solution $u_1=u_2=0$, consistently
with the physical interpretation.

So from now on we assume $N_1>0$ and start by showing the following:
 
\begin{lemma}\label{lemma1}
Let $\lm_2>\lm_1$, $|B|>|C|$, and assume that
\be\label{8.5}
m^2|A|^2<2\left(\frac{\lm_2}{\lm_1}-1\right)|C|^2.
\ee
Then
\be \label{8.6}
\frac{u_1}{\lm_1}\pm\frac{u_2}{\lm_2}<0,\quad\mbox{and in particular}\quad u_1<0.
\ee 
\end{lemma}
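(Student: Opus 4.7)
My plan is to derive and exploit a coupled system of equations for the natural combinations
\[
w_\pm \;=\; \frac{u_1}{\lambda_1}\;\pm\;\frac{u_2}{\lambda_2}.
\]
Dividing (\ref{8.1a}) by $\lambda_1$ and (\ref{8.1b}) by $\lambda_2$ and taking the sum and difference, I would obtain, away from the vortex support,
\[
\Delta w_+ \;=\; m|A|^2\bigl(\mathrm{e}^{mu_1}-1\bigr)\;+\;2|B|^2\bigl(\mathrm{e}^{u_1+u_2}-1\bigr),\qquad
\Delta w_- \;=\; m|A|^2\bigl(\mathrm{e}^{mu_1}-1\bigr)\;+\;2|C|^2\bigl(\mathrm{e}^{u_1-u_2}-1\bigr).
\]
A local analysis at each $p_{j,s}$, together with the inclusion (\ref{7}) and $N_2\leq N_1$ (and the fact that $\lambda_2>\lambda_1$ makes the coefficient of the log-singularity at a $p_{2,s}$ strictly positive), shows that both $w_+$ and $w_-$ tend to $-\infty$ at every vortex point; hence any positive supremum of $w_\pm$ is attained at an interior regular point in the periodic case, or (using $w_\pm\to 0$ at infinity) at a finite point in the planar case.

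Next I would argue by contradiction. Set $M:=\max\{\sup w_+,\sup w_-\}$ and suppose $M>0$. Because the roles of $(w_+,|B|^2)$ and $(w_-,|C|^2)$ are interchanged in the two equations, and $|B|>|C|$, it suffices (after the analogous argument with roles swapped) to treat the case $M=\sup w_+$, attained at some $x_0$ with $w_-(x_0)\leq M$. Writing $u_1+u_2=\tfrac{\lambda_1+\lambda_2}{2}w_++\tfrac{\lambda_1-\lambda_2}{2}w_-$ and using $\lambda_1<\lambda_2$, the bound $w_-(x_0)\leq M$ forces $(u_1+u_2)(x_0)\geq \lambda_1 M>0$, hence $\mathrm{e}^{u_1+u_2}(x_0)-1\geq \mathrm{e}^{\lambda_1 M}-1>0$. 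The inequality $\Delta w_+(x_0)\leq 0$ then compels $m|A|^2(\mathrm{e}^{mu_1(x_0)}-1)<0$, so $u_1(x_0)<0$, and in turn $w_-(x_0)=2u_1(x_0)/\lambda_1-M<-M$, giving $u_2(x_0)>\lambda_2 M>0$ and $u_1(x_0)-u_2(x_0)<-\lambda_2 M<0$. Quantitatively, from $m|A|^2\bigl(1-\mathrm{e}^{mu_1(x_0)}\bigr)\geq 2|B|^2\bigl(\mathrm{e}^{\lambda_1 M}-1\bigr)$ and $1-\mathrm{e}^{mu_1(x_0)}<1$, one gets $\mathrm{e}^{\lambda_1 M}\leq 1+m|A|^2/(2|B|^2)$.

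To close the argument I would invoke the hypothesis (\ref{8.5}) through the $w_-$ equation at the same point $x_0$. There, the pointwise bounds just derived show that both $m|A|^2(\mathrm{e}^{mu_1}-1)$ and $2|C|^2(\mathrm{e}^{u_1-u_2}-1)$ are strictly negative, so $\Delta w_-(x_0)<0$; combining this with $\Delta w_+(x_0)\leq 0$ and the identity $\Delta w_++\Delta w_-=2\Delta u_1/\lambda_1$, together with the exact ratio controlled by $2(\lambda_2/\lambda_1-1)|C|^2/m^2|A|^2>1$, yields the desired contradiction. Once $w_\pm\leq 0$ is established, writing the $w_\pm$ equations in the linearized form $\Delta w_\pm=a_\pm(x)w_\pm+b_\pm(x)w_\mp$ (via $\mathrm{e}^a-1=a\int_0^1\mathrm{e}^{sa}\,ds$) and applying the strong maximum principle rules out interior zeros (they would force $w_\pm\equiv 0$, contradicting $w_\pm(p_{j,s})=-\infty$), giving the strict bound $w_\pm<0$, and then $u_1=\tfrac{\lambda_1}{2}(w_++w_-)<0$. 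The main obstacle, as indicated, is the last step: the naive pointwise estimate at the maximum of $w_+$ only bounds $M$ from above by a positive constant, and it is the precise quantitative balance in (\ref{8.5}) between $|A|^2$ and $(\lambda_2/\lambda_1-1)|C|^2$ that must be leveraged by consulting both master equations simultaneously at $x_0$.
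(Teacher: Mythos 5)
Your setup matches the paper's: you work with $w_\pm=\frac{u_1}{\lm_1}\pm\frac{u_2}{\lm_2}$, note that they blow down to $-\infty$ at the vortex points thanks to (\ref{7}) and $\lm_2>\lm_1$, and try to run a maximum-principle contradiction at an interior maximum. But the argument does not close, and you concede as much in your last sentence. Two concrete problems. First, the claimed symmetry between the cases $M=\sup w_+$ and $M=\sup w_-$ is not there: hypothesis (\ref{8.5}) pits $m^2|A|^2$ against $|C|^2$ only, and $|B|>|C|$ helps in only one direction, so the two cases require genuinely different arguments (in the paper the $w_-$ case is done first and is the hard one; the $w_+$ case is then easy because $u_1<0$ is already known). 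Second, and decisively, the chain of sign deductions at the maximum point $x_0$ of $w_+$ ends with the bound $\e^{\lm_1 M}\leq 1+m|A|^2/(2|B|^2)$, which is merely an upper bound on $M$ and no contradiction; the subsequent appeal to $\Delta w_-(x_0)<0$ and to $\Delta w_++\Delta w_-=\frac{2}{\lm_1}\Delta u_1$ proves nothing, because $x_0$ is a critical point of $w_+$ only --- nothing forces $\Delta w_-(x_0)\geq 0$ or $\Delta u_1(x_0)\geq 0$ there, so a strict negative sign for these quantities is perfectly consistent. The "precise quantitative balance in (\ref{8.5})" is invoked but never actually used.

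For comparison, the paper's proof enters (\ref{8.5}) in a different and essential way: at a nonnegative maximum $x_0$ of $w_-$ it linearizes via $\e^t-1\geq t$ and regroups the right-hand side of the $w_-$ equation as
\[
\lm_1\bigl(m^2|A|^2+2|C|^2\bigr)\Bigl(\tfrac{u_1(x_0)}{\lm_1}-\tfrac{u_2(x_0)}{\lm_2}\Bigr)+\tfrac{\lm_1}{\lm_2}\Bigl(m^2|A|^2-2|C|^2\bigl[\tfrac{\lm_2}{\lm_1}-1\bigr]\Bigr)u_2(x_0),
\]
so that (\ref{8.5}) makes the coefficient of $u_2(x_0)$ negative and forces $u_2(x_0)\geq0$, hence $u_1(x_0)\geq0$ and $\max u_1\geq0$. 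It then switches to the rewritten equation (\ref{8.2a}), where $|B|>|C|$ forces $u_2\leq0$ at any nonnegative maximum of $u_1$; a short chain of inequalities pins everything to zero and the strong maximum principle yields $u_1\equiv\lm_1\lm_2^{-1}u_2$, impossible unless $N_1=N_2=0$. Only after $w_-<0$ and $u_1<0$ are in hand does the $w_+$ inequality follow, using $m^2|A|^2-2|B|^2(\lm_2/\lm_1-1)<m^2|A|^2-2|C|^2(\lm_2/\lm_1-1)<0$ together with $u_1<0$. You would need to supply this (or an equivalent) mechanism --- in particular the passage through a maximum of $u_1$ itself and the reformulated equation (\ref{8.2a}) --- to turn your outline into a proof.
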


\begin{remark} It is important to notice that our assumption (\ref{2.15x}) compares to (\ref{8.5}) as follows. If $0<m<2$, then (\ref{2.15x}) is stronger than
(\ref{8.5}) and implies it. If $m\geq2$, then (\ref{2.15x}) and (\ref{8.5}) are one and the same.
\end{remark}

\begin{proof} Observe that, if by contradiction we assume that there exists $x_0$:
\be 
\frac1{\lm_1} u_1(x_0)-\frac1{\lm_2} u_2(x_0)=\max\left(\frac1{\lm_1}u_1-\frac1{\lm_2} u_2\right)\geq0,
\ee
then $x_0\not\in\{p_{1,1},\cdots,p_{1,N_1}\}\cup\{p_{2,1},\cdots,p_{2,N_2}\}$ and it satisfies
\bea 
0&\geq&\Delta\left(\frac{u_1}{\lm_1}-\frac{u_2}{\lm_2}\right)(x_0)=m|A|^2\left(\e^{mu_1(x_0)}-1\right)+2|C|^2\left(\e^{u_1(x_0)-u_2(x_0)}-1\right)\nn\\
&\geq&\lm_1\left(m^2|A|^2+2|C|^2\right)\left(\frac{u_1(x_0)}{\lm_1}-\frac{u_2(x_0)}{\lm_2}\right)+\frac{\lm_1}{\lm_2}\left(m^2|A|^2-2|C|^2\left[\frac{\lm_2}{\lm_1}-1\right]\right)u_2(x_0).\nn
\eea
Thus, from (\ref{8.5}), we get $u_2(x_0)\geq0$. As a consequence, $u_1(x_0)\geq (\lm_1/\lm_2)u_2(x_0)\geq0$. Therefore, there exists $x_1\not\in\{p_{1,1},\cdots,p_{1,N_1}\}$ such that
\[
u_1(x_1)=\max u_1\geq0.
\]
But then from (\ref{8.2a}) we readily see that $u_2(x_1)\leq0$ and we conclude with
\[
\frac{u_1(x_1)}{\lm_1}\leq\frac{u_1(x_1)}{\lm_1}-\frac{u_2(x_1)}{\lm_2}\leq \frac{u_1(x_0)}{\lm_1}-\frac{u_2(x_0)}{\lm_2}\leq \frac{u_1(x_0)}{\lm_1}\leq\frac{u_1(x_1)}{\lm_1}.
\]
That is, $u_2(x_1)=u_2(x_0)=0$ and by virtue of (\ref{8.2a}) again we have
\[
\max u_1=u_1(x_1)=0=\frac{u_1(x_0)}{\lm_1}-\frac{u_2(x_0)}{\lm_2}=\max\left\{\frac{u_1}{\lm_1}-\frac{u_2}{\lm_2}\right\}.
\]
Consequently,
\bea 
\Delta\left(\frac{u_1}{\lm_1}-\frac{u_2}{\lm_2}\right)&\geq&\left(m^2|A|^2-2|C|^2\left[\frac{\lm_2}{\lm_1}-1\right]\right) u_1+2|C|^2\lm_2\left(\frac{u_1}{\lm_1}-\frac{u_2}{\lm_2}\right)\nn\\
&\geq& 2|C|^2\lm_2\left(\frac{u_1}{\lm_1}-\frac{u_2}{\lm_2}\right).\nn
\eea
So, we can use the strong maximum principle to conclude that 
\[
\frac{u_1}{\lm_1}\equiv \frac{u_2}{\lm_2},
\]
which is certainly never allowed by (\ref{8.1a})--(\ref{8.1b}), unless $u_1\equiv u_2\equiv0$. That is, $N_1=N_2=0$. Hence for $N_1>0$, there holds
\be\label{8.7}
\frac{u_1}{\lm_1}-\frac{u_2}{\lm_2}<0.
\ee

On the other hand, from (\ref{8.7}) we also get that, necessarily, $u_1<0$. Indeed, if, as before, we assume that there exists $x_1\not\in\{p_{1,s}\,|\,s=1,\cdots, N_1\}$ such that
$u_1(x_1)=\max u_1\geq0$, then $u_2(x_1)\leq0$ and we get a contradiction as follows:
\[
0\leq \frac{u_1(x_1)}{\lm_1}\leq\frac{u_1(x_1)}{\lm_1}-\frac{u_2(x_1)}{\lm_2}<0.
\]
In turn, we see that
\bea 
\Delta\left(\frac{u_1}{\lm_1}+\frac{u_2}{\lm_2}\right)&=&m|A|^2(\e^{mu_1}-1)+2|B|^2(\e^{u_1+u_2}-1)+\frac{4\pi}{\lm_1}\sum_{s=1}{N_1}\delta_{p_{1,s}}+\frac{4\pi}{\lm_2}\sum_{s=1}^{N_2}
\delta_{p_{2,s}}\nn\\
&\geq&\left(m^2|A|^2-2|B|^2\left[\frac{\lm_2}{\lm_1}-1\right]\right)u_1+2|B|^2\lm_2\left(\frac{u_1}{\lm_1}+\frac{u_2}{\lm_2}\right)\nn\\
&\geq&2|B|^2\lm_2\left(\frac{u_1}{\lm_1}+\frac{u_2}{\lm_2}\right),\nn
\eea
where the last inequality follows by the observation that
\[ 
m^2|A|^2-2|B|^2\left(\frac{\lm_2}{\lm_1}-1\right)<m^2|A|^2-2|C|^2\left(\frac{\lm_2}{\lm_1}-1\right)<0\quad\mbox{and}\quad u_1<0.
\]
Whence,
\[
\frac{u_1}{\lm_1}+\frac{u_2}{\lm_2}<0,
\]
as it follows easily by the maximum principle.
\end{proof}

Next we show how the stronger assumption (\ref{2.15x}) (when $0<m<2$) actually implies
\[
u_1\pm u_2<0.
\]

To this purpose, observe that
\bea 
\Delta(u_1+u_2)&=&\lm_1\left(m|A|^2[\e^{mu_1}-1]-\left[\frac{\lm_2}{\lm_1}-1\right]|C|^2[\e^{u_1-u_2}-1]\right)\nn\\
&&\,+(\lm_1+\lm_2)|B|^2(\e^{u_1+u_2}-1)
+4\pi\left(\sum_{s=1}^{N_1}\delta_{p_{1,s}}+\sum_{s=1}^{N_2}\delta_{p_{2,s}}\right),\label{8.8a}\\
\Delta(u_1-u_2)&=&\lm_1\left(m|A|^2[\e^{mu_1}-1]-\left[\frac{\lm_2}{\lm_1}-1\right]|B|^2[\e^{u_1+u_2}-1]\right)\nn\\
&&\,+(\lm_1+\lm_2)|C|^2(\e^{u_1-u_2}-1)
+4\pi\left(\sum_{s=1}^{N_1}\delta_{p_{1,s}}-\sum_{s=1}^{N_2}\delta_{p_{2,s}}\right).\label{8.8b}
\eea
Therefore, if by contradiction we assume that there exists
\[ 
\bar{x}\not\in\{p_{1,1},\cdots,p_{1,N_1}\}:\quad (u_1+u_2)(\bar{x})=\max(u_1+u_2)\geq0,
\]
then $u_1(\bar{x})\geq -u_2(\bar{x})$, and we find:
\[
0\geq\Delta(u_1+u_2)(\bar{x})\geq\lm_1\left(m|A|^2[\e^{mu_1(\bar{x})}-1]-\left[\frac{\lm_2}{\lm_1}-1\right]|C|^2[\e^{2u_1(\bar{x})}-1]\right).
\]
In case $0<m\leq2$, then realizing that $u_1<0$ (by Lemma \ref{lemma1}) and by (\ref{2.15x}), we obtain the desired contradiction as follows:
\[
m|A|^2(\e^{mu_1(\bar{x})}-1)-\left(\frac{\lm_2}{\lm_1}-1\right)|C|^2(\e^{2u_1(\bar{x})}-1)\geq \left(m|A|^2-\left[\frac{\lm_2}{\lm_1}-1\right]|C|^2\right)(\e^{2u_1(\bar{x})}-1)>0.
\]
In case $m>2$, then (\ref{8.5}) and (\ref{2.15x}) coincide and in this case, we arrive at the desired contradiction, by observing that under (\ref{8.5}), the function
\[
f(t)=m|A|^2(\e^{mt}-1)-\left(\frac{\lm_2}{\lm_1}-1\right)|C|^2(\e^{2t}-1)
\]
satisfies $f(t)>0$, $\forall t<0$.

Similarly, we show that $u_1-u_2<0$. Since if by contradiction we assume there exists 
\[
\hat{x}\not\in\{p_{1,1},\cdots,p_{1,N_1}\}\setminus\{p_{2,1},\cdots,p_{2,N_2}\}:\quad (u_1-u_2)(\hat{x})=\max(u_1-u_2)\geq0,
\]
then $u_1(\hat{x})\geq u_2(\hat{x})$ and we get
\[
0\geq\Delta(u_1-u_2)(\hat{x})\geq\lm_1\left(m|A|^2[\e^{mu_1(\hat{x})}-1]-\left[\frac{\lm_2}{\lm_1}-1\right]|B|^2[\e^{2u_1(\hat{x})}-1]\right).
\]
At this point, recalling that $|B|>|C|$, we reach a contradiction arguing exactly as above.

To conclude the proof of Theorem \ref{theorem2.2}, it  remains to establish (\ref{q1}) and (\ref{q2}). In the periodic case,
they just express (\ref{et1}) and (\ref{et2}) once we introduce the vacuum constraint (\ref{2.7}) and use (\ref{x2.16}).
In the planar case, they follow exactly as in the Abelian case (e.g., see the proof of Proposition
3.2.4 in \cite{Ta1}). Indeed, multiplying (\ref{eq1}) by the test function $\chi_R(x)=\chi(\frac xR)$ with $\chi\in C^\infty_c(B_2)$ satisfying $\chi\equiv1$ in $B_1$ 
 and integrate over $\bfR^2$, to obtain:
\bea 
&&\int_{\bfR^2}\left\{m|A|^2(1-\e^{mu_1})+|B|^2(1-\e^{u_1+u_2})+|C|^2 (1-\e^{u_1-u_2})\right\}\chi_R\,\dd x\nn\\
&&=\frac1{\lm_1}\left(\int_{\bfR^2} g_1\chi_R\,\dd x-\int_{\bfR^2}\Delta v_1\chi_R\,\dd x\right)\nn\\
&&=\frac1{\lm_1}\left(\int_{\bfR^2} g_1\chi_R\,\dd x-\int_{\bfR^2} v_1 \Delta\chi_R\,\dd x\right)\to 4\pi \frac{N_1}{\lm_1}\quad\mbox{as }R\to\infty,\nn
\eea
since
\[
\lim_{R\to\infty}\int_{\bfR^2} g_1 \chi_R\,\dd x=\int_{\bfR^2} g_1\,\dd x=4\pi N_1,
\]
and
\[ 
\left|\int_{\bfR^2} v_1\chi_R\,\dd x\right|\leq\left(\|v_1\|_{L^\infty(R\leq|x|\leq 2R)}\right)\left(\int_{B_2}|\Delta \chi|\,\dd x\right)\to0\quad \mbox{as }R\to\infty.
\]
Thus, we deduce that $(1-\e^{mu_1}),(1-\e^{u_1\pm u_2})\in L^1(\bfR^2)$ and
\be 
m|A|^2\|1-\e^{mu_1}\|_1+|B|^2\|1-\e^{u_1+u_2}\|_1+|C|^2\|1-\e^{u_1-u_2}\|_1=\frac{4\pi N_1}{\lm_1}.
\ee

Using (\ref{eq2}), by a similar argument, we get
\be 
|B|^2\|1-\e^{u_1+u_2}\|_1-|C|^2\|1-\e^{u_1-u_2}\|_1=\frac1{\lm_2}\int_{\bfR^2} g_2\,\dd x=\frac{4\pi N_2}{\lm_2}.
\ee

So (\ref{q1}) and (\ref{q2}) are established.

In concluding this section, we mention that, by standard methods, it is not hard to show that, under the given assumption, $u_1\pm u_2$ approach zero at infinity
exponentially fast.
\section{Degenerate cases: doubly periodic solutions}
\setcounter{equation}{0}

In this section, we focus on the problems over a doubly periodic domain $\Om$.

(i) $A\neq0, B=C=0$. In this case the system decouples and we reduce to analyze the single equation:
\be \label{ab}
\Delta u_1=\lm_1(m|A|^2 \e^{mu}-\xi_1)+4\pi\sum_{s=1}^{N_1}\delta_{p_{1,s}}(x),
\ee
which is the well-known Abelian vortex equation, that admits a unique solution if and only if
\be \label{abc}
4\pi\frac{N_1}{\lm_1}<\xi_1|\Om|.
\ee

(ii) $A=0, B\neq0, C=0$. The equations (\ref{2.2}) and (\ref{2.3}) now become
\bea 
\Delta v_1&=&\lm_1 \left(|B|^2\e^{u_1^0+u_2^0+v_1+v_2}-\xi_1\right)+\frac{4\pi}{|\Om|}{N_1},\quad\quad\label{2.15}\\
\Delta v_2&=&\lm_2 \left(|B|^2\e^{u^0_1+u^0_2+v_1+v_2}-\xi_2\right)+\frac{4\pi}{|\Om|}{N_2},\label{2.16}
\eea
and no longer we need to assume (\ref{7}). Since
 (\ref{et2}) gives us
\be\label{2.17}
\eta_2=0,
\ee
then 
\be 
\Delta \left(\frac{v_1}{\lm_1}-\frac{v_2}{\lm_2}\right)=-(\xi_1-\xi_2)+\frac{4\pi}{|\Om|}\left(\frac{N_1}{\lm_1}-\frac{N_2}{\lm_2}\right)=-\frac2{|\Om|}\eta_2=0,
\ee
 which implies that $v_2=\lm_1^{-1}\lm_2 v_1$ up to a constant. The equations are turned into a single one, and a unique solution exists if and only if (\ref{abc}) holds.

(iii) $A=0, B=0, C\neq0$. The governing equations now read
\bea 
\Delta v_1&=&\lm_1 \left(|C|^2\e^{u_3^0+v_1-v_2}-\xi_1\right)+\frac{4\pi}{|\Om|}{N_1},\quad\quad\label{2.19}\\
\Delta v_2&=&-\lm_2 \left(|C|^2\e^{u_3^0+v_1-v_2}+\xi_2\right)+\frac{4\pi}{|\Om|}{N_2}.\label{2.20}
\eea
Since (\ref{et1}) implies $\eta_1=0$ or
\[
(\xi_1+\xi_2)|\Om|-4\pi\left(\frac{N_1}{\lm_1}+\frac{N_2}{\lm_2}\right)=0,
\]
 we have
\be 
\Delta\left(\frac{v_1}{\lm_1}+\frac{v_2}{\lm_2}\right)=0
\ee
and, thus, $v_2=-\lm^{-1}_1\lm_2 v_1$ up to a constant, and we arrive at the same conclusion as in the case (ii).

(iv) $A\neq0, B\neq0, C=0$. Again, we no longer need to require (\ref{7}), as the equations are
\bea 
\Delta v_1&=&\lm_1 \left(m|A|^2\e^{mu_1^0+mv_1}+|B|^2\e^{u_1^0+u_2^0+v_1+v_2}-\xi_1\right)+\frac{4\pi}{|\Om|}{N_1},\quad\quad\label{2.22}\\
\Delta v_2&=&\lm_2 \left(|B|^2\e^{u^0_1+u^0_2+v_1+v_2}-\xi_2\right)+\frac{4\pi}{|\Om|}{N_2},\label{2.23}
\eea
with associated action functional
\bea \label{I3}
I(v_1,v_2)
&=&\int_\Om\left\{\frac1{2\lm_1}|\nabla v_1|^2+\frac1{2\lm_2}|\nabla v_2|^2+|A|^2\e^{mu_1^0+mv_1}+|B|^2\e^{u_1^0+u_2^0+v_1+v_2}\right\}\,\dd x\nn\\
&&\quad -\eta_1(\ud{v}_1+\ud{v}_2)-\eta_2(\ud{v}_1-\ud{v}_2).
\eea
On the other hand, with $C=0$, the constraints (\ref{et1}) and (\ref{et2}) are refined into
\bea 
\int_\Om \frac m2|A|^2\e^{mu_1^0+mv_1}\,\dd x&=&\eta_2>0,\\
\int_\Om |B|^2\e^{u_1^0+u_2^0+v_1+v_2}\,\dd x&=&\eta_1-\eta_2\equiv\eta_3>0.
\eea
Thus, we can substitute
\be 
\eta_1(\ud{v}_1+\ud{v}_2)+\eta_2(\ud{v}_1-\ud{v}_2)=2\eta_2 \ud{v}_1+\eta_3 (\ud{v}_1+\ud{v}_2)
\ee
into (\ref{I3}) and follow the proof of Theorem \ref{theorem2.2} to show that the functional (\ref{I3}) has a unique critical point in $H(\Om)\times H(\Om)$.

(v) $A\neq0, B=0, C\neq0$. The equations are 
\bea 
\Delta v_1&=&\lm_1 \left(m|A|^2\e^{mu_1^0+mv_1}+|C|^2\e^{u_3^0+v_1-v_2}-\xi_1\right)+\frac{4\pi}{|\Om|}{N_1},\quad\quad\label{2.27}\\
\Delta v_2&=&-\lm_2 \left(|C|^2\e^{u_3^0+v_1-v_2}+\xi_2\right)+\frac{4\pi}{|\Om|}{N_2},\label{2.28}
\eea
with associated action functional
\bea \label{I4}
I(v_1,v_2)
&=&\int_\Om\left\{\frac1{2\lm_1}|\nabla v_1|^2+\frac1{2\lm_2}|\nabla v_2|^2+|A|^2\e^{mu_1^0+mv_1}+|C|^2\e^{u_3^0+v_1-v_2}\right\}\,\dd x\nn\\
&&\quad -\eta_1(\ud{v}_1+\ud{v}_2)-\eta_2(\ud{v}_1-\ud{v}_2).
\eea
When $B=0$, the constraints (\ref{et1}) and (\ref{et2}) become
\bea 
\int_\Om \frac m2|A|^2\e^{mu_1^0+mv_1}\,\dd x&=&\eta_1>0,\\
\int_{\Om}|C|^2\e^{u_3^0+v_1-v_2}\,\dd x&=&
\eta_2-\eta_1\equiv\eta_4>0.
\eea
Therefore, we may insert
\be 
\eta_1(\ud{v}_1+\ud{v}_2)+\eta_2(\ud{v}_1-\ud{v}_2)=2\eta_1 \ud{v}_1+\eta_4 (\ud{v}_1-\ud{v}_2)
\ee
into (\ref{I4}) and adapt the proof of Theorem \ref{theorem2.2} to establish the existence and uniqueness of a critical point of the functional (\ref{I4}).

In summary, we can state

\begin{theorem}\label{theorem2}
In the degenerate cases when $B$ or $C$ vanishes (we need to require
(\ref{7}) only when $C\neq0$), the existence and uniqueness of a solution to the master equations (\ref{4}) and (\ref{5}) hold under the following
necessary and sufficient conditions.

If $A\neq0, B=C=0$, the equations are reduced to (\ref{ab}) and the condition is (\ref{abc}). If $A=0, B\neq0, C=0$ or $A=0, B=0, C\neq0$, the equations are also reduced
to (\ref{ab}) and the condition reads
\be 
\frac{N_1}{\lm_1}<\xi_1\frac{|\Om|}{4\pi},
\ee
 along with
\be \label{6.21}
\frac{N_1}{\lm_1}-\frac{N_2}{\lm_2}=(\xi_1-\xi_2)\frac{|\Om|}{4\pi},
\ee
 or 
\be \label{6.22}
\frac{N_1}{\lm_1}+\frac{N_2}{\lm_2}=(\xi_1+\xi_2)\frac{|\Om|}{4\pi},
\ee
respectively.

If $A\neq0, B\neq0, C=0$, then the condition reads
\be 
4\pi\left(\frac{N_1}{\lm_1}-\frac{N_2}{\lm_2}\right)<(\xi_1-\xi_2)|\Om|,\quad 4\pi\frac{N_2}{\lm_2}<\xi_2|\Om|.\label{cd3}
\ee

If $A\neq0, B=0, C\neq0$, then the condition states
\be
4\pi\left(\frac{N_1}{\lm_1}+\frac{N_2}{\lm_2}\right)<(\xi_1+\xi_2)|\Om|,\quad \xi_2|\Om|<4\pi\frac{N_2}{\lm_2}. \label{cd4}
\ee

In all cases, the solutions may be constructed by a direct minimization method.
\end{theorem}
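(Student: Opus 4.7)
The plan is to handle the five degenerate configurations in turn, reducing to known situations whenever possible and otherwise adapting the variational scheme already developed in Section 4 for Theorem \ref{theorem2.2}.

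Case (i), $A\neq 0$ and $B=C=0$, decouples: the first equation reduces to (\ref{ab}), so the classical Abelian existence and uniqueness theory (for instance the argument used in Theorem \ref{theorem2.1}) gives solvability precisely under (\ref{abc}). Cases (ii) ($A=C=0$, $B\neq 0$) and (iii) ($A=B=0$, $C\neq 0$) reduce to a single Abelian equation by a linear combination. Integrating the system over $\Om$ forces $\eta_2=0$ in (ii) and $\eta_1=0$ in (iii), which are precisely the identities (\ref{6.21}) and (\ref{6.22}) (necessity). With those identities in force, $v_1/\lm_1-v_2/\lm_2$ in case (ii) (respectively $v_1/\lm_1+v_2/\lm_2$ in case (iii)) is harmonic on $\Om$ and hence constant by periodicity. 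Back-substitution pins $v_2$ to $v_1$ affinely and collapses the system to (\ref{ab}), so the remaining Abelian inequality $N_1/\lm_1<\xi_1|\Om|/(4\pi)$ is necessary and sufficient by Theorem \ref{theorem2.1}.

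For the genuinely coupled cases (iv) and (v), I would follow the variational template of Section 4. Integration of (\ref{2.22})--(\ref{2.23}) (respectively (\ref{2.27})--(\ref{2.28})) yields the sharpened constraints stated just before (\ref{I3}) (resp.\ before (\ref{I4})), and positivity of $\eta_2,\eta_3$ (resp.\ $\eta_1,\eta_4$) is equivalent to (\ref{cd3}) (resp.\ (\ref{cd4})), which gives necessity. For sufficiency, I would minimize (\ref{I3}) (resp.\ (\ref{I4})) on $H(\Om)\times H(\Om)$. Using the splitting $v_j=\ud{v}_j+\dot{v}_j$ with $\int_\Om\dot{v}_j\,\dd x=0$, Jensen's inequality on each exponential term, and the recombination $\eta_1(\ud{v}_1+\ud{v}_2)+\eta_2(\ud{v}_1-\ud{v}_2)=2\eta_2\ud{v}_1+\eta_3(\ud{v}_1+\ud{v}_2)$ (and the analogous rewriting in case (v)), one arrives at a lower estimate of the form
\[
I(v_1,v_2)\geq\frac{\|\nabla\dot{v}_1\|_2^2}{2\lm_1}+\frac{\|\nabla\dot{v}_2\|_2^2}{2\lm_2}+|A|^2|\Om|\,\e^{m\ud{v}_1}+|B|^2|\Om|\,\e^{\ud{v}_1+\ud{v}_2}-2\eta_2\ud{v}_1-\eta_3(\ud{v}_1+\ud{v}_2).
\]
Combined with the Poincar\'e inequality on $\dot H(\Om)$, this yields full coercivity. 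Weak lower semi-continuity and strict convexity of $I$ (both inherited from the form of the functional) then deliver a unique minimizer, which by standard elliptic regularity solves the corresponding system.

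The main obstacle I anticipate is the coercivity check in cases (iv) and (v): one must verify that the two exponentials in the reduced functional simultaneously control \emph{both} zero modes $\ud{v}_1,\ud{v}_2$. This comes down to the elementary observation that, since $\eta_2,\eta_3>0$ (resp.\ $\eta_1,\eta_4>0$) and the pair of linear forms $(\ud{v}_1,\,\ud{v}_1+\ud{v}_2)$ (resp.\ $(\ud{v}_1,\,\ud{v}_1-\ud{v}_2)$) spans $\bfR^2$, the reduced two-variable map $(s,t)\mapsto ae^s+be^t-\alpha s-\beta t$ with $a,b,\alpha,\beta>0$ is coercive and bounded below. Beyond that, the analytical ingredients (Moser--Trudinger, Jensen, strict convexity) are exactly those already used in Section 4, so the remaining verifications are routine.
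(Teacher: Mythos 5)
Your proposal follows the paper's own argument in all five configurations: the decoupling in case (i), the observation that the integral constraints force $\eta_2=0$ (resp.\ $\eta_1=0$) so that $v_1/\lm_1\mp v_2/\lm_2$ is a periodic harmonic function, hence constant, collapsing cases (ii)--(iii) to the Abelian equation, and for (iv)--(v) the identification of (\ref{cd3}), (\ref{cd4}) with the positivity of $\eta_2,\eta_3$ (resp.\ $\eta_1,\eta_4$) followed by Jensen-based coercivity and strict convexity of the reduced functionals (\ref{I3}), (\ref{I4}). The explicit lower bound and the remark that the linear forms $\ud{v}_1$ and $\ud{v}_1\pm\ud{v}_2$ span the zero-mode plane are precisely what the paper means by ``follow the proof of Theorem \ref{theorem2.2},'' so the argument is correct and essentially identical to the published one.
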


It is interesting to note that the pair of conditions (\ref{6.21}) and (\ref{6.22}), and (\ref{cd3}) and (\ref{cd4}), are interchangeable under the correspondence $(N_2,\xi_2)\leftrightarrow -(N_2,\xi_2)$.

\section{Degenerate cases: planar solutions}
\setcounter{equation}{0}

We now consider the degenerate cases concerning the planar solutions of the equations (\ref{4}) and (\ref{5}) subject to the boundary conditions (\ref{bc}).
We have already shown existence and uniqueness when at least two of the constants $A$, $B$, and $C$ do not vanish. Here we consider the other degenerate cases.

(i) $A\neq0, B=C=0$. In this case, we do not need (\ref{7}), and the condition (\ref{2.7}) gives $\xi_2=0$ and the system decouples into
an equation for $u_1$ as that in the Abelian Higgs model (\ref{a1.6}) and a linear equation for $u_2$, given as follows:
\bea 
\Delta u_1&=&\lm_1\left(m|A|^2\e^{m u_1}-\xi_1\right)+4\pi\sum_{s=1}^{N_1}\delta_{p_{1,s}}(x),\label{eq8.1}\\
\Delta u_2&=&4\pi\sum_{s=1}^{N_2}\delta_{p_{2,s}}(x).\label{eq8.2}
\eea
Obviously, the natural boundary condition for (\ref{eq8.1}), (\ref{eq8.2}) only involves $u_1$ and requires $u_1(x)\to 0$ as $|x|\to\infty$. Thus the existence and uniqueness
of a solution $u_1<0$ follows by the same analysis of the Abelian Higgs model (\ref{a1.6}). While a solution of (\ref{eq8.2}) with logarithmic growth may be obtained via the
fundamental solution.

(ii) $A=0, B\neq0, C=0$. From (\ref{2.7}), we see that $\xi_1=\xi_2\equiv\xi=|B|^2$. Thus the system becomes
\bea
\Delta u_1&=&\lm_1\left(|B|^2\e^{u_1+u_2}-\xi\right)+4\pi\sum_{s=1}^{N_1}\delta_{p_{1,s}}(x),\label{x7.3}\\
\Delta u_2&=&\lm_2\left(|B|^2\e^{u_1+u_2}-\xi\right)+4\pi\sum_{s=1}^{N_2}\delta_{p_{2,s}}(x).\label{x7.4}
\eea
Now the natural boundary condition involves only $u_1+u_2$ as follows: $(u_1+u_2)(x)\to0$ as $|x|\to\infty$.
With the background functions defined in (\ref{5u}) and (\ref{5g}) and setting $u_j=u_j^0+v_j$ ($j=1,2$), we may recast the above equations into
\bea
\Delta v_1&=&\lm_1\left(|B|^2\e^{u_1^0+u_2^0+v_1+v_2}-\xi\right)+g_1,\label{5v1}\\
\Delta v_2&=&\lm_2\left(|B|^2\e^{u_1^0+u_2^0+v_1+v_2}-\xi\right)+g_2.\label{5v2}
\eea




Again, we no longer need to require (\ref{7}). Moreover, (\ref{5v1}) and (\ref{5v2}) can be easily handled as the Abelian model (\ref{a1.6}) once we
introduce the known $w=v_1+v_2$ that must satisfy:
\be \label{x7.7}
\left.\begin{array}{ll}
&\Delta w=(\lm_1+\lm_2)|B|^2 (H_1\e^w-1)+g_1+g_2,\\
&w\to0\quad\mbox{as }|x|\to\infty,\end{array}\right.
\ee
where the function $H_1$ is as given in (\ref{x5.3}).

We know the existence and uniqueness of a solution $w\in W^{1,2}(\bfR^2)$ of (\ref{x7.7}) that also satisfies:
$H_1 \e^w<1$, $w\to0$ exponentially fast at infinity, and
\be \label{x7.8}
|B|^2\int_{\bfR^2}(1-H_1 \e^w)\,\dd x=\frac{4\pi (N_1+N_2)}{\lm_1+\lm_2}.
\ee
At this point, we can use $v_1+v_2=w$ in (\ref{5v1})--(\ref{5v2}) and linear elliptic theory to find $v_1$ and $v_2$ (unique up to an additive constant) and see that
in general they admit a logarithmic growth at infinity at a rate determined by the integral value of the right-hand side of (\ref{5v1}) and (\ref{5v2}) respectively.

From (\ref{x7.8}), we easily compute:
\bea
\frac1{2\pi}\left(\lm_1|B|^2\int_{\bfR^2}(H_1\e^w-1)\,\dd x+4\pi N_1\right)&=&\frac{2\lm_1\lm_2}{\lm_1+\lm_2}\left(\frac{N_1}{\lm_1}-\frac{N_2}{\lm_2}\right)\nn\\
&=&-\frac1{2\pi}\left(\lm_2|B|^2\int_{\bfR^2}(H_1\e^w-1)\,\dd x+4\pi N_2\right),\nn
\eea
and obtain:
\be\label{x7.9}
\begin{array}{ll}&v_1(x)=\frac{2\lm_1\lm_2}{\lm_1+\lm_2}\left(\frac{N_1}{\lm_1}-\frac{N_2}{\lm_2}\right)\ln|x|+\mbox{O}(1),\\
&v_2(x)=-\frac{2\lm_1\lm_2}{\lm_1+\lm_2}\left(\frac{N_1}{\lm_1}-\frac{N_2}{\lm_2}\right)\ln|x|+\mbox{O}(1),\end{array}
\mbox{as }|x|\to\infty.
\ee
For details, see e.g. Lemma 1.1 and Corollary 1.1 of \cite{CT}.

Thus, in this case in terms of the original variables $u_1$ and $u_2$, we can claim the existence of a solution for (\ref{x7.3})--(\ref{x7.4})
satisfying: $u_1+u_2<0$, $u_1+u_2\to 0$ exponentially fast at infinity, and
\[
|B|^2\int_{\bfR^2}(1-\e^{u_1+u_2})\,\dd x=\frac{4\pi(N_1+N_2)}{\lm_1+\lm_2}.
\]
Moreover, by the maximum principle, we see that:
\[ 
\mbox{if }\frac{N_1}{\lm_1}-\frac{N_2}{\lm_2}<0,\quad\mbox{then }u_1<0\mbox{ in }\bfR^2,
\]
\[ 
\mbox{if }\frac{N_1}{\lm_1}-\frac{N_2}{\lm_2}>0,\quad\mbox{then }u_2<0\mbox{ in }\bfR^2.
\]
Only in the limiting case: $\frac{N_1}{\lm_1}=\frac{N_2}{\lm_1}$, we can actually ensure that $(v_1,v_2)\in W^{1,2}(\bfR^2)\times W^{1,2}(\bfR^2)$; and so both
$u_1$ and $u_2$ vanish at infinity and $u_1<0, u_2<0$ in $\bfR^2$.

(iii) $A=B=0$, $C\neq0$. This case requires (\ref{7}) and the natural boundary condition: $(u_1-u_2)(x)\to0$ as $|x|\to\infty$.

Such a case can be treated as above with the help of the new variable $w=v_1-v_2$, which must satisfy a problem as (\ref{x7.7}) only with $H_1$ replaced by $H_2$, $|B|^2$
by $|C|^2$, and $g_1+g_2$ by $g_1-g_2$.

So, by similar arguments, also in this case the existence of a solution pair $(u_1,u_2)$ can be established satisfying:
$u_1-u_2<0$ in $\bfR^2$, $u_1-u_2\to0$ at infinity exponentially fast, and
\[ 
|C|^2\int_{\bfR^2}(1-\e^{u_1-u_2})\,\dd x=\frac{4\pi(N_1-N_2)}{\lm_1+\lm_2}.
\]


Finally, we remark that in all parameter regimes the following ``flux quantization" formulas
\be 
\int\left( m|A|^2\e^{mu_1}+|B|^2\e^{u_1+u_2}+|C|^2\e^{u_1-u_2}-\xi_1\right)=-\frac{4\pi N_1}{\lm_1},
\ee 
\be
\int\left(|B|^2\e^{u_1+u_2}-|C|^2 \e^{u_1-u_2}-\xi_2\right)=-\frac{4\pi N_2}{\lm_2},
\ee 
are valid for solutions of (\ref{4})--(\ref{5}) over a doubly periodic domain, or the full plane. These expressions are weaker than but cover those
stated in (\ref{q1}) and (\ref{q2}) which involve the $L^1$-norms of various quantities. They are exactly what have been used in literature for the computation of the tension of the non-Abelian
vortex tubes.

\small{

}

\begin{thebibliography}{99}
\bibitem{Ab}
A. A. Abrikosov, On the magnetic properties of superconductors of the second group,
{\em Sov. Phys. JETP} {\bf5}, 1174--1182 (1957).

\bibitem{Aubin}
T. Aubin,
{\em Nonlinear Analysis on Manifolds: Monge--Amp\'{e}re Equations}, Springer, Berlin and
New York, 1982.

\bibitem{Auzzi}
R. Auzzi, S. Bolognesi, J. Evslin, K. Konishi,  and A. Yung,
Nonabelian superconductors: vortices and confinement in ${\cal N}=2$ SQCD,
{\em Nucl. Phys.} B {\bf673} (2003) 187--216.

\bibitem{BT}
D. Bartolucci and G. Tarantello,  Liouville type equations with singular data and their applications to periodic multivortices for the electroweak theory,
{\em  Commun. Math. Phys.}  {\bf229}  (2002) 3--47. 

\bibitem{Bo}
E. B. Bogomol'nyi, The stability of classical solutions, {\em Sov. J. Nucl. Phys.} {\bf24} (1976)
449--454.

\bibitem{CY}
L. Caffarelli and Y. Yang, Vortex condensation in the Chern--Simons Higgs model: an existence
theorem, {\em Commun. Math. Phys.} {\bf168} (1995) 321--336.

\bibitem{CT}
D. Chae and G. Tarantello, One planar electroweak vortices, {\em Ann. I. H. P. Anal. Nonlin.} {\bf21} (2004) 187--207.

\bibitem{Chang1}
S.-Y. A. Chang and P. C. Yang,  The inequality of Moser and Trudinger and applications to conformal geometry,  {\em Comm. Pure Appl. Math.}  {\bf56}  (2003) 1135--1150. 

\bibitem{Chang2}
S.-Y. A. Chang and P. C. Yang, Prescribing Gaussian curvature on $S^2$, {\em Acta Math.} {\bf159}  (1987) 215--259. 

\bibitem{Chang3}
 S.-Y. A. Chang and P. C. Yang,  Conformal deformation of metrics on $S^2$, {\em J. Diff. Geom. } {\bf27} (1988) 259--296.  

\bibitem{CL}
C.-C. Chen and C.-S. Lin, 
Topological degree for a mean field equation on Riemann surfaces,
{\em Comm. Pure Appl. Math.} {\bf56} (2003) 1667--1727. 

\bibitem{CL1}
C.-C. Chen and C.-S. Lin,  Mean field equations of Liouville type with singular data: sharper estimates, {\em  Discrete Contin. Dyn. Syst.}  {\bf28}  (2010) 1237--1272.



\bibitem{Eto}
M. Eto, T. Fujimori, S. B. Gudnason, K. Konishi, T. Nagashima, M. Nitta, K. Ohashi, and W. Vinci,
     Fractional vortices and lumps,
    {\em Phys. Rev.} D {\bf80} (2009) 045018.

\bibitem{Eto-survey}
M. Eto, Y. Isozumi, M. Nitta, K. Ohashi, and N. Sakai,
Solitons in the Higgs phase -- the moduli matrix approach,
	{\em J. Phys.} A {\bf39} (2006) R315--R392.

\bibitem{Fon}
L. Fontana, Sharp borderline Sobolev inequalities on compact Riemannian manifolds,
{\em Comment. Math. Helv.} {\bf68} (1993) 415--454.

\bibitem{GL}
V. L. Ginzburg and L. D. Landau, On the theory of superconductivity, in {\em Collected Papers
of L. D. Landau} (edited by D. Ter Haar), pp. 546--568, Pergamon, New York, 1965.

\bibitem{Gr}
J. Greensite,
{\em An Introduction to the Confinement Problem},
Lecture Notes in Physics
{\bf821}, Springer-Verlag, Berlin and New York, 2011.

\bibitem{HT}
A. Hanany and D. Tong,
Vortices, instantons and branes,
{\em J. High Energy Phys.} {\bf0307} (2003) 037. 

\bibitem{JT}
A. Jaffe and C. H. Taubes, {\em Vortices and Monopoles}, Birkh\"{a}user, Boston, 1980.

\bibitem{Kon-survey1}
K. Konishi,
The magnetic monopole seventy-five years later,
in {\em Lect. Notes Phys.} {\bf737}, pp 471--521, 2008.
 
\bibitem{Kon-survey2}
K. Konishi,
Advent of non-Abelian vortices and monopoles -- further thoughts about duality and confinement,
{\em Prog. Theor. Phys. Suppl.} {\bf177} (2009) 83--98.

\bibitem{Lieb-Yang}
E. H. Lieb and Y. Yang, Non-Abelian vortices in supersymmetric gauge field theory via direct methods,
{\em Commun. Math. Phys.}, to appear.

\bibitem{LY}
C. S. Lin and Y. Yang, Sharp existence and uniqueness theorems for
non-Abelian multiple vortex solutions, {\em Nucl. Phys.} B {\bf846} (2011) 650--676.

\bibitem{LY2}
C. S. Lin and Y. Yang, Non-Abelian multiple vortices in supersymmetric field theory, {\em Commun. Math. Phys.} {\bf304} (2011) 433--457.

\bibitem{Man1}
S. Mandelstam, Vortices and quark confinement in non-Abelian gauge theories,
 {\em Phys. Lett.} B
{\bf53} (1975) 476--478.

\bibitem{Man2}
S. Mandelstam, General introduction to confinement,
{\em Phys. Rep.} C {\bf67} (1980) 109--121.


\bibitem{MY}
A. Marshakov and A. Yung,
Non-Abelian confinement via Abelian flux tubes in softly broken ${\cal N}=2$ SUSY QCD,
{\em Nucl. Phys.} B {\bf647} (2002) 3--48. 

\bibitem{Nambu}
Y. Nambu, Strings, monopoles, and gauge fields, {\em Phys. Rev.} D {\bf10} (1974) 4262--4268.

\bibitem{NO}
H. B. Nielsen and P. Olesen,
 Vortex-line models for dual strings,
 {\em Nucl. Phys.} B {\bf61} (1973) 45--61.





\bibitem{PS}
M. K. Prasad and C. M. Sommerfield,
Exact classical solutions for the 't Hooft monopole and the Julia--Zee dyon, {\em
Phys. Rev. Lett.} {\bf35} (1975) 760--762.

\bibitem{Schiff}
J. Schiff, Integrability of Chern--Simons Higgs and Abelian Higgs vortex equations in a background metric, {\em J. Math. Phys.} {\bf32} (1991) 753--761.

\bibitem{SW}
N. Seiberg and E. Witten,
Monopole condensation, and confinement in ${\cal N}=2$ supersymmetric Yang-Mills theory,
{\em Nucl. Phys.} B {\bf426} (1994) 19--52. Erratum -- {\em ibid.} B {\bf430} (1994) 485--486.

\bibitem{ShY0}
M. Shifman, and A. Yung,
Supersymmetric solitons and how they help us understand non-Abelian gauge theories,
{\em Rev. Mod. Phys.} {\bf79} (2007) 1139--1196.

\bibitem{ShY}
M. Shifman and A. Yung,
{\em Supersymmetric Solitons}, Cambridge U. Press, Cambridge, U. K., 2009.

\bibitem{SYw1}
J. Spruck and Y. Yang, On multivortices in the electroweak theory I: existence of periodic
solutions, {\em Commun. Math. Phys.} {\bf144} (1992) 1--16.



\bibitem{tH1}
G. 't Hooft, On the phase transition towards permanent quark confinement, {\em Nucl. Phys.} B {\bf138} (1978) 1--25.

\bibitem{tH0}
G. 't Hooft, A property of electric and magnetic flux in nonabelian gauge theories, {\em Nucl. Phys.}
{\bf B 153} (1979) 141--160.

\bibitem{tH2}
G. 't Hooft, Topology of the gauge condition and new confinement phases in non-Abelian gauge theories,
{\em Nucl. Phys.} B {\bf190} (1981) 455--478.


\bibitem{Ta0}
G. Tarantello, Multiple condensate solutions for the Chern--Simons--Higgs theory,
{\em J. Math. Phys.} {\bf37} (1996) 3769--3796.

\bibitem{Ta}
G. Tarantello, Analytical, geometrical and topological aspects of a class of mean field equations on surfaces, {\em Discrete Contin. Dyn. Syst.} {\bf28} (2010) 931--973.

\bibitem{Ta1}
G. Tarantello, {\em Self-Dual Gauge Field Vortices, an Analytic Approach}, Progress in Nonlinear Differential Equations and Their Applications
{\bf72}, Birkh\"{a}user, Boston, Basel, Berlin, 2008.

\bibitem{T1}
C. H. Taubes, Arbitrary $N$-vortex solutions to the first order Ginzburg--Landau equations, {\em Commun. Math.
Phys.} {\bf72} (1980) 277--292.

\bibitem{T2}
C. H. Taubes,  On the equivalence of the first and second order equations for gauge theories,
{\em Commun. Math. Phys.} {\bf75} (1980) 207--227.

\bibitem{Tong}
D. Tong, Quantum vortex strings: a review,
{\em Ann. Phys.} {\bf324} (2009) 30--52. 

\bibitem{WY}
S. Wang and Y. Yang, Abrikosov's vortices in the critical coupling, {\em SIAM J. Math. Anal.}
{\bf23} (1992) 1125--1140.

\bibitem{Ybook}
Y. Yang, {\em Solitons in Field Theory and Nonlinear Analysis}, Springer-Verlag, New York, 2001.

\end{thebibliography}
\end{document}